 \newtheorem{theoremA}{Theorem}
 \newtheorem{lemmaA}{Lemma}
 \newtheorem{definitionA}{Definition}
 \newtheorem{exampleA}{Example}
\begin{document}
\frontmatter          
\mainmatter              
\title{Automated Proof Search System For Logic Of Correlated Knowledge}
\titlerunning{Automated Proof Search System For Logic Of Correlated Knowledge}  
%

\author{Haroldas Giedra, Romas Alonderis}
\institute{Institute of Data Science and Digital Technologies, Vilnius University, Akademijos str. 4, LT-08663 Vilnius, Lithuania\\
\email{haroldas.giedra@mii.vu.lt, romas.alonderis@mii.vu.lt}}

\maketitle              

\begin{abstract}
The automated proof search system and decidability for logic of correlated knowledge is presented in this paper. The core of the proof system is the sequent calculus with the properties of soundness, completeness, admissibility of cut and structural rules, and invertibility of all rules. The proof search procedure based on the sequent calculus performs automated terminating proof search and allows us to achieve decision result for logic of correlated knowledge.

\keywords{Logic of correlated knowledge, sequent calculus, automated proof system, decidability, soundness, completeness, admissibility of the cut rule.}
\end{abstract}
\vspace{2mm}

\section*{Introduction}

Logic of correlated knowledge (LCK) has been introduced by Alexandru Baltag and Sonja Smets in \cite{BaltagSmets2010}. LCK is an epistemic logic enriched by observational capabilities of agents. Applications of the epistemic logic cover fields such as distributed systems, merging of knowledge bases, robotics or network security in computer science and artificial intelligence. By adding observational capabilities to agents, logic of correlated knowledge can be applied to reason about systems where knowledge correlate between spatially distributed parts of the system. This includes any social system, distributed information system, traffic light system, quantum system or any other system where knowledge is correlated.\\\\
Complex system may consist of one or more elementary parts. Associating agent to each part, we get multi-agent system, where agents can perform observations and get results. Allowing communication between agents, correlations between knowledge in distributed parts can be extracted. This can not be done by traditional epistemic logic or logic of distributed knowledge.\\\\
Our main scientific result is proof search system GS-LCK-PROC for logic of correlated knowledge, which lets to reason about knowledge automatically. The core of the system is the sequent calculus GS-LCK with the properties of soundness, completeness, admissibility of cut and structural rules, and invertibility of all rules. The ideas of semantic internalization, suggested by Sara Negri in \cite{Negri2005}, are used to get such properties for the calculus. The calculus provides convenient means for backward proof search and decision procedure for logic of correlated knowledge. The procedure generates a finite model for each sequent. As a result we get termination of the proof search and decidability of logic of correlated knowledge.\\\\
We start by defining syntax, semantics, and the Hilbert style proof system for logic of correlated knowledge in section \ref{sec:LogicOfCorrelatedKnowledge}. In section \ref{sec:GentzenStyleSequentCalculusGSLCK} we present Gentzen style sequent calculus GS-LCK and the properties of the proof system. Soundness and completeness of the GS-LCK and the properties of admissibility of weakening, contraction and cut are proved in sections \ref{sec:ProofOfSoundnessOfGSLCK},  \ref{sec:ProofOfThePropertiesOfGSLCK} and \ref{sec:ProofOfCompletenessOfGSLCK}. Automated proof search system GS-LCK-PROC and decidability of logic of correlated knowledge are presented in the final section \ref{sec:DecidabilityOfLogicOfCorrelatedKnowledge}.

\section{Logic of correlated knowledge}
\label{sec:LogicOfCorrelatedKnowledge}
\subsection{Syntax}
Consider a set \(N=\{a_{1},a_{2},...,a_{n}\}\) of agents. Each agent can perform its local observations. Given sets \(O_{a_{1}},..., O_{a_{n}}\) of possible observations for each agent, a joint observation is a tuple of observations \(o = (o_{a})_{a \in N}\in O_{a_{1}}\times ... \times O_{a_{n}}\) or \(o= (o_{a})_{a \in I}\in  O_{I}\), where \(O_{I}:= \times _{a \in I} O_{a}\) and \(I \subseteq N\). Joint observations together with results \(r \in R\) make new atomic formulas \(o^{r}\). 
\\\\
Each agent can know some information, and it is written as \(K_{a_{1}} A\) or \(K_{\{a_{1}\}} A\), which means that the agent \(a_{1}\) knows \(A\). A group of agents can also know some information and this is denoted by \(K_{\{a_{1},a_{2},a_{3}\}} A\) or \(K_{I} A\), where \(I=\{a_{1},a_{2},a_{3}\}\). A more detailed description about the knowledge operator \(K\) is given in \cite{Fagin1992,Hoek1997}.
\\\\
Syntax of logic of correlated knowledge is defined as follows:
\begin{definitionA}[Syntax of logic of correlated knowledge]\label{definitionA:1}
The language of logic of correlated knowledge has the following syntax:
\begin{displaymath}
A:=p\mid o^{r} \mid \neg A \mid A\vee B \mid A\wedge B \mid A \rightarrow B \mid K_{I}A
\end{displaymath}
where p is any atomic proposition, \(o = (o_{a})_{a \in I}\in O_{I}, r \in R\), and \(I \subseteq N\).
\end{definitionA} 
\subsection{Semantics}
Consider a system, composed of \(N\) components or locations. Agents can be associated to locations, where they will perform observations. States (configurations) of the system are functions \(s:O_{a_{1}}\times ... \times O_{a_{n}}\rightarrow R\) or \(s_{I}:O_{I}\rightarrow R\), where \(I \subseteq N\) and a set of results \(R\) is in the structure \((R,\Sigma)\) together with an abstract operation \(\Sigma: \mathcal{P}(R)\rightarrow R\) of composing results. The operation \(\Sigma\) maybe partial (defined only for some subsets \(A \subseteq R\)), but it is required to satisfy the condition: \(\Sigma\{\Sigma A_{k}: k \in K \} = \Sigma (\cup_{k \in K} A_{k})\) whenever \(\{A_{k}: k \in K \}\) are pairwise disjoint. \(\mathcal{P}(R)\) is a power set of \(R\).  For every joint observation \(e \in O_{I}\), the local state \(s_{I}\) is defined as:
\begin{displaymath}
s_{I}((e_{a})_{a \in I}) := \Sigma \{s(o):o\in O_{a_{1}}\times ... \times O_{a_{n}} \:\text{such that}\: o_{a}=e_{a} \:\text{for all}\: a\in I\}
\end{displaymath}  
If \(s\) and \(t\) are two possible states of the system and a group of agents \(I\) can make exactly the same observations in these two states, then these states are observationally equivalent to \(I\), and it is written as \(s\overset{I}{\sim}t\). Observational equivalence is defined as follows: 
\begin{definitionA}[Observational equivalence]\label{definitionA:1}
Two states \(s\) and \(t\) are observationally equivalent \(s\overset{I}{\sim}t \:\:\text{iff}\:\:s_{I}=t_{I}\).
\end{definitionA} 
A model of logic of correlated knowledge is a multi-modal Kripke model \cite{Kripke1963}, where the relations between states  mean observational equivalence. It is defined as:
\begin{definitionA}[Model of logic of correlated knowledge]\label{definitionA:1}
For a set of states \(S\), a family of binary relations \(\{\overset{I}{\sim}\}_{I \subseteq N} \subseteq S\times S\) and a function of interpretations \(V:S\rightarrow (P \rightarrow \{true, false\})\), where \(P\) is a set of atomic propositions, a model of logic of correlated knowledge is a multi-modal Kripke model \((S, \{\overset{I}{\sim}\}_{I \subseteq N}, V)\) that satisfies the following conditions:
\begin{enumerate}
\item 
For each \(I \subseteq N\), \(\overset{I}{\sim}\) is a multi-modal equivalence relation;
\item 
Information is monotonic: if \(I \subseteq J\), then \(\overset{J}{\sim} \subseteq \overset{I}{\sim} \);
\item 
Observability principle: if \(s\overset{N}{\sim}s'\), then \(s=s'\);
\item 
Vacuous information: \(s\overset{\emptyset}{\sim}s'\) for all \(s,s'\in S\).
\end{enumerate} 
\end{definitionA} 
The satisfaction relation \(\models\)  for model \(M\), state \(s\) and formulas \(o^{r}\) and \(K_{I}A\) is defined as follows:
\begin{itemize}[label=$\bullet$]
\item \(M,s\models K_{I}A\) iff \(M,t\models A\) for all states \(t\overset{I}{\sim}s\).
\item \(M,s\models o^{r}\) \hspace{8px} iff \(s_{I}(o)=r\).
\end{itemize} 
The formula \(K_{I}A\) means that the group of agents \emph{I} carries the information that \(A\) is the case, and \(o^{r}\) means that \(r\) is the result of the joint observation \(o\).
\\\\
If formula \(A\) is true in any state of any model, then it is named as a valid formula.

\subsection{Hilbert style calculus HS-LCK}
Alexandru Baltag and Sonja Smets defined the Hilbert style calculus for logic of correlated knowledge in \cite{BaltagSmets2010}. Fixing a finite set \(N=\{a_{1},...,a_{n}\}\) of agents, a finite result structure \((R,\Sigma)\) and a tuple of finite sets \(\vec O = (O_{a_{1}},...,O_{a_{n}})\) of observations, for every set \(I, J\subseteq N\), every joint observation \( o\in O_{I}, \:O_{I}= \times _{a \in I} O_{a}\), and results \( r, p\in R \), the Hilbert style calculus for logic of correlated knowledge over \((R,\Sigma,\vec O)\) is as follows:
\begin{itemize}[label=$\bullet$]

\item Axioms:

\begin{description}

\item[H1.] \(A\rightarrow (B \rightarrow A ) \)
\vspace{1mm}

\item[H2.] \((A\rightarrow (B \rightarrow C )) \rightarrow ((A \rightarrow B ) \rightarrow (A \rightarrow C))  \)
\vspace{1mm}

\item[H3.] \( (\neg A\rightarrow \neg B ) \rightarrow (B \rightarrow A ) \)
\vspace{1mm}

\item[H4.] \(K_{I}(A\rightarrow B)\rightarrow (K_{I} A \rightarrow K_{I} B) \) \hspace*{5mm}   \textit{(Kripke's axiom)}
\vspace{1mm}

\item[H5.] \(K_{I}A\rightarrow A\) \hspace*{36mm}
\textit{(Truthfulness)}
\vspace{1mm}

\item[H6.] \(K_{I}A\rightarrow K_{I}K_{I}A\) \hspace*{27mm}
\textit{(Positive introspection)}
\vspace{1mm}

\item[H7.] \(\neg K_{I}A\rightarrow K_{I}\neg K_{I}A\) \hspace*{22mm}
\textit{(Negative introspection)}
\vspace{1mm}

\item[H8.] \(K_{I}A\rightarrow K_{J}A,  \:\:\text{where}\:\: I\subseteq J\) \hspace*{9mm}
\textit{(Monotonicity of group knowledge)}
\vspace{1mm}

\item[H9.] \(A\rightarrow K_{N}A\) \hspace*{35mm}
\textit{(Observability)}
\vspace{1mm}

\item[H10.] \( \underset{o\in O_{I}}{ \wedge} \:\underset{r\in R}{ \vee} o^{r} \)   \hspace*{32mm} \textit{(Observations always yield results)}
\vspace{1mm}

\item[H11.] \(o^{r} \rightarrow \neg o^{p}, \:\:\text{where}\:\: r \neq p \)  \hspace*{14mm} \textit{(Observations have unique results)}
\vspace{1mm}
\vspace{1mm}

\item[H12.] \(o^{r}_{I} \rightarrow K_{I} o^{r}_{I} \)  \: \hspace*{31mm} \textit{(Groups know the results of}
\\\hspace*{57mm} \textit{their joint observations)}
\vspace{1mm}
\vspace{1mm}

\item[H13.] \((\underset{o\in O_{I}}{\wedge} o^{r_{o}} \wedge K_{I}A)\rightarrow  K_{\emptyset}(\underset{o\in O_{I}}{\wedge} o^{r_{o}} \rightarrow A)\)  
\\\hspace*{55mm}  \textit{(Group knowledge is correlated }
\\\hspace*{57mm} \textit{knowledge (i.e. is based on joint}
\\\hspace*{57mm} \textit{observations))}
\vspace{1mm}
\vspace{1mm}

\item[H14.] \(\underset{o\in \bar e}{\wedge} o^{r_{o}} \rightarrow e^{\Sigma \{r_{o}:o \in \bar e\}}, \:\:\text{where}\:\:\:\:\:\:\:\:\:\: e\in O_{I}, \: \bar e := \{o= (o_{i})_{i \in N}\in O_{i_{1}}\times ... \times O_{i_{n}}: \hspace*{4mm} o_{i}=e_{i} \:\text{for all}\: i\in I\}\).
\hspace*{18mm} \textit{(Result composition axiom)}
\vspace{1mm}
\vspace{1mm}

\end{description}

\item Rules:
\begin{displaymath}
\frac{A,A\rightarrow B}{B} \:\:(Modus\:ponens) \:\:\:\:\:\:\:\:\:\:\:\:\:\:\:\:\:\:\:\:
\frac{A}{K_{I}A} \:\:(K_{I} - necessitation) \:\:\:\:\:\:\:\:\:\:\:\:\:\:\:\:\:\:\:\:
\end{displaymath}

\end{itemize}
Sets \(I, J\) may be empty in axioms H4 - H8 and in rule \((K_{I} - necessitation)\).\\\\
The Hilbert style calculus HS-LCK for logic of correlated knowledge is sound and complete with respect to correlation models over \((R,\Sigma,\vec O)\) \cite{BaltagSmets2010}.

\section{Gentzen style sequent calculus GS-LCK}
\label{sec:GentzenStyleSequentCalculusGSLCK}
Gerhard Gentzen introduced sequent calculus in 1934 \cite{Gentzen1934}. Sequents in the system GS-LCK are statements of the form \(\Gamma\Rightarrow \Delta\), where \(\Gamma\) and \( \Delta\) are finite, possibly empty multisets of relational atoms \(s\overset{I}{\sim}t\) and labelled formulas \(s:A\), where \(s, t \in S\), \(I\subseteq N\) and \(A\) is any formula in the language of logic of correlated knowledge. The formula \(s:A\) means \(s\models A\), and \(s\overset{I}{\sim}t\) is an observational equivalence or relation between the states in the model of logic of correlated knowledge.\\\\
The sequent calculus consists of axioms and rules. Applying rules to the sequents, a proof-search tree for the root sequent is constructed. If axioms are in all the leaves of the proof-search tree, then the root sequent is called as a provable sequent and \(\Delta\) follows  from \(\Gamma\) of the root sequent.\\\\
Fixing a finite set \(N=\{a_{1},...,a_{n}\}\) of agents, a finite result structure \((R,\Sigma)\) and a tuple of finite sets \(\vec O = (O_{a_{1}},...,O_{a_{n}})\) of observations, for every set \(I, J\subseteq N\), every joint observation \( o\in O_{I}, \:O_{I}= \times _{a \in I} O_{a}\), and results \( r, p\in R \), the Gentzen style sequent calculus GS-LCK for logic of correlated knowledge over \((R,\Sigma,\vec O)\) is as follows:
\begin{itemize}[label=$\bullet$]
\item Axioms:
\begin{itemize}[label=$\centerdot$]
\item \(s:p,\Gamma\Rightarrow \Delta, s:p.\)
\item \(s:o^{r},\Gamma\Rightarrow \Delta, s:o^{r}.\)
\item \(s:o^{r_{1}},s:o^{r_{2}},\Gamma\Rightarrow \Delta\), where \(r_{1}\neq r_{2}.\)
\end{itemize}
\vspace{1mm}
\vspace{1mm}

\item Propositional rules:
\begin{displaymath}
\frac{\Gamma \Rightarrow \Delta, s:A}{s:\neg A,\Gamma\Rightarrow\Delta}\:\:(\neg\Rightarrow) \:\:\:\:\:\:\:\:\:\:\:\:\:\:\:\:\:\:\:\:\:\:\:\:\:\:\:\:\:\:\:\:\:\:\:\:\:\:\:\:\:\:\:\:\:\:\:\:\:\:
\frac{s: A,\Gamma \Rightarrow\Delta}{\Gamma\Rightarrow \Delta, s:\neg A}\:\:(\Rightarrow\neg) \:\:\:\:\:\:\:\:\:\:\:\:\:\:\:\:\:\:
\end{displaymath}
\vspace{1 pt}
\begin{displaymath}
\frac{ s:A, \Gamma\Rightarrow \Delta\:\:\:\:\:\:\:\:s:B, \Gamma\Rightarrow \Delta}{ s:A\vee B, \Gamma\Rightarrow \Delta}\:\:(\vee\Rightarrow) \:\:\:\:\:\:\:\:\:\:\:\:\:\:\:\:\:\:\:\:\:\:\:\:\:\:\:\:\:\:\:\:\:\:\:\:\:\:\:\:\:\:\:\:\:\:
\frac{\Gamma \Rightarrow\Delta,s:A,s:B}{\Gamma\Rightarrow\Delta, s:A\vee B}\:\:(\Rightarrow\vee) \:\:\:\:\:\:\:\:\:\:\:\:\:\:\:\:\:\:\:\:\:\:\:\:\:\:\:\:\:\:\:\:\:\:\:\:\:\:\:\:\:\:\:\:\:\:\:\:\:\:
\end{displaymath}
\vspace{1 pt}
\begin{displaymath}
\frac{s:A,s:B,\Gamma \Rightarrow\Delta}{s:A\wedge B,\Gamma\Rightarrow\Delta}\:\:(\wedge\Rightarrow) \:\:\:\:\:\:\:\:\:\:\:\:\:\:\:\:\:\:\:\:\:\:\:\:\:\:\:\:\:\:\:\:\:\:\:\:\:\:\:\:\:\:\:\:\:\:\:\:\:\:
\frac{\Gamma\Rightarrow \Delta, s:A\:\:\:\:\:\:\:\:\Gamma\Rightarrow \Delta, s:B}{\Gamma\Rightarrow \Delta, s:A\wedge B}\:\:(\Rightarrow\wedge) \:\:\:\:\:\:\:\:\:\:\:\:\:\:\:\:\:\:
\end{displaymath}
\vspace{1 pt}
\begin{displaymath}
\frac{ \Gamma\Rightarrow \Delta,s:A\:\:\:\:\:\:\:\:s:B, \Gamma\Rightarrow \Delta}{ s:A\rightarrow B, \Gamma\Rightarrow \Delta}\:\:(\rightarrow\Rightarrow) \:\:\:\:\:\:\:\:\:\:\:\:\:\:\:\:\:\:\:\:\:\:\:\:\:\:\:\:\:\:\:\:\:\:\:\:\:\:\:\:\:\:\:\:\:\:
\frac{s:A,\Gamma \Rightarrow\Delta,s:B}{\Gamma\Rightarrow\Delta, s:A\rightarrow B}\:\:(\Rightarrow\rightarrow) \:\:\:\:\:\:\:\:\:\:\:\:\:\:\:\:\:\:\:\:\:\:\:\:\:\:\:\:\:\:\:\:\:\:\:\:\:\:\:\:\:\:\:\:\:\:\:\:\:\:
\end{displaymath}
\vspace{1 pt}
\vspace{1mm}

\item Knowledge rules:
\begin{displaymath}
\frac{t:A, s:K_{I}A, s\overset{I}{\sim}t, \Gamma\Rightarrow\Delta}{s:K_{I}A, s\overset{I}{\sim}t, \Gamma\Rightarrow\Delta}\:\:( \: K_{I}\Rightarrow) \:\:\:\:\:\:\:\:\:\:\:\:\:\:\:\:\:\:\:\:\:\:\:\:\:\:\:\:\:\:\:\:\:\:\:\:\:\:\:\:\:\:\:\:\:\:\:\:\:\:
\frac{s\overset{I}{\sim}t, \Gamma\Rightarrow \Delta, t:A}{\Gamma\Rightarrow \Delta, s:K_{I}A}\:\:(\Rightarrow K_{I}) \:\:\:\:\:\:\:\:\:\:\:\:\:\:\:\:\:\: 
\end{displaymath}
The rule \((K_{I}\Rightarrow)\) requires that \(I\neq N\) and \(t:A\) be not in \(\Gamma\). The rule \((\Rightarrow K_{I})\) requires that \(I\neq N\) and \textit{t} be not in the conclusion. Set \(I\) maybe an empty set in both rules.
\begin{displaymath}
\frac{s:A, s:K_{N}A, s\overset{N}{\sim}s, \Gamma\Rightarrow\Delta}{s:K_{N}A, s\overset{N}{\sim}s, \Gamma\Rightarrow\Delta}\:\:( \: K_{N}\Rightarrow) \:\:\:\:\:\:\:\:\:\:\:\:\:\:\:\:\:\:\:\:\:\:\:\:\:\:\:\:\:\:\:\:\:\:\:\:\:\:\:\:\:\:\:\:\:\:\:\:\:\:
\frac{s\overset{N}{\sim}s, \Gamma\Rightarrow \Delta, s:A}{\Gamma\Rightarrow \Delta, s:K_{N}A}\:\:(\Rightarrow\:K_{N}) 
\end{displaymath}
The rule \((K_{N}\Rightarrow)\) requires that \(s:A\) be not in \(\Gamma\). The rule \((\Rightarrow K_{N})\) requires that \(s:A\) be not in \(\Delta\).
\vspace{1 pt}
\vspace{1mm}

\item Observational rules:
\begin{displaymath}
\frac{s\overset{I}{\sim}t,\{s:o^{r_{o}} \}_{o\in O_{I}}, \{t:o^{r_{o}} \}_{o\in O_{I}}, \Gamma\Rightarrow  \Delta}{\{s:o^{r_{o}} \}_{o\in O_{I}}, \{t:o^{r_{o}} \}_{o\in O_{I}}, \Gamma\Rightarrow  \Delta}\:\:(OE) \:\:\:\:\:\:\:\:\:\:\:\:\:\:\:\:\:\:\:\:\:\:\:\:\:\:\:\:\:\:\:\:\:\:\:\:\:\:\:\:\:\:\:\:\:\:\:\:\:\:\:\:\:\:\:\:\:\:\:\:\:\:\:\:
\end{displaymath}
The rule \((OE)\) requires that \(I\neq \emptyset\) and formulas \(s\overset{I}{\sim}t\), \(s:o^{r_{o}}\) and \(t:o^{r_{o}}\) be not in \(\Gamma\), where \(o\in O_{I}\). 
\vspace{1 pt}
\begin{displaymath}
\frac{\{s:o^r_I,\Gamma \Rightarrow \Delta \}_{r\in R}}{\Gamma\Rightarrow\Delta} \:\:(OYR) \:\:\:\:\:\:\:\:\:\:\:\:\:\:\:\:\:\:\:\:\:\:\:\:\:\:\:\:\:\:\:\:\:\:\:\:\:\:\:\:\:\:\:\:\:\:\:\:\:\:\:\:\:\:\:\:\:\:\:\:\:\:\:\:\:\:\:\:\:\:\:\:\:\:\:\:\:\:\:\:\:\:\:\:\:\:\:\:\:\:\:\:\:\:\:\:\:\:\:\:\:\:\:\:\:\:\:\:\:\:\:\:\:\:\:\:\:\:\:\:\:\:\:\:\:\:\:\:\:\:\:\:
\end{displaymath}
The rule \((OYR)\) requires: 
\begin{enumerate}
\item \(s:o^{r}_I\) be not in \(\Gamma\) for all \(r \in R\) and \(s:o^{r_{1}}_I\) be in \(\Delta\) for some \(r_{1} \in R\).
\item \(I\neq \emptyset\).
\end{enumerate}
\vspace{12 pt}
\begin{displaymath}
\frac{s:e^{\Sigma \{r_{o_N}:o_N \in \bar e\}}_I, \{s:o^{r_{o_N}}_N\}_{o_N\in \bar e}, \Gamma \Rightarrow \Delta}{\{s:o^{r_{o_N}}_N\}_{o_N\in \bar e},\Gamma\Rightarrow\Delta}\:\:(CR) \:\:\:\:\:\:\:\:\:\:\:\:\:\:\:\:\:\:\:\:\:\:\:\:\:\:\:\:\:\:\:\:\:\:\:\:\:\:\:\:\:\:\:\:\:\:\:\:\:\:\:\:\:\:\:\:\:\:\:\:\:\:\:\:\:\:\:\:\:\:\:\:
\end{displaymath}
The rule \((CR)\) requires that \(s:e^{\Sigma \{r_{o_N}:o_N \in \bar e\}}_I\) be not in \(\Gamma\). 
\vspace{10 pt}
\item Substitution rules:
\begin{displaymath}
\frac{s:p, t:p, s\overset{N}{\sim}t, \Gamma\Rightarrow\Delta}{ t:p, s\overset{N}{\sim}t, \Gamma\Rightarrow\Delta}\:\:( \: Sub(p) \Rightarrow) \:\:\:\:\:\:\:\:\:\:\:\:\:\:\:\:\:\:\:\:\:\:\:\:\:\:\:\:\:\:\:\:\:\:\:\:\:\:\:\:\:\:\:\:\:\:\:\:\:\:
\frac{s:o^{r}, t:o^{r}, s\overset{I}{\sim}t, \Gamma\Rightarrow\Delta}{ t:o^{r}, s\overset{I}{\sim}t, \Gamma\Rightarrow\Delta}\:\:( \: Sub(o^{r}) \Rightarrow) 
\end{displaymath}
\vspace{1mm}
The rules \((Sub(p) \Rightarrow)\) and \((Sub(o^{r})\Rightarrow)\) require that \(s:p\) and \(s:o^{r}\) be not in \(\Gamma\), accordingly. 
\\
\vspace{1 pt}
\item Relational rules:
\begin{displaymath}
\frac{ s\overset{I}{\sim}s, \Gamma\Rightarrow\Delta}{\Gamma\Rightarrow\Delta}\:\:(Ref) \:\:\:\:\:\:\:\:\:\:\:\:\:\:\:\:\:\:\:\:\:\:\:\:\:\:\:\:\:\:\:\:\:\:\:\:\:\:\:\:\:\:\:\:\:\:\:\:\:\:
\frac{ s\overset{I}{\sim}t, s\overset{I}{\sim}s', s'\overset{I}{\sim}t, \Gamma\Rightarrow\Delta}{s\overset{I}{\sim}s', s'\overset{I}{\sim}t,\Gamma\Rightarrow\Delta}\:\:(Trans)
\end{displaymath}
The rule \((Ref)\) requires that \(s\) be in the conclusion and \(s\overset{I}{\sim}s\) be not in \(\Gamma\). The rule \((Trans)\) requires that \(s\overset{I}{\sim}t\) be not in \(\Gamma\).
\begin{displaymath}
\frac{ s'\overset{I}{\sim}t, s\overset{I}{\sim}s', s\overset{I}{\sim}t, \Gamma\Rightarrow\Delta}{s\overset{I}{\sim}s', s\overset{I}{\sim}t,\Gamma\Rightarrow\Delta}\:\:(Eucl) \:\:\:\:\:\:\:\:\:\:\:\:\:\:\:\:\:\:\:\:\:\:\:\:  
\frac{  s\overset{I}{\sim}t, s\overset{J}{\sim}t, \Gamma\Rightarrow\Delta}{s\overset{J}{\sim}t,\Gamma\Rightarrow\Delta}\:\:(Mon) \:\:\:\:\:\:\:\:\:\:\:\:\:\:\:\:\:\:\:\:\:\:\:\:\:\:\:
\end{displaymath}
The rule \((Mon)\) stands for monotonicity and requires that \(I\subseteq J\). Sets \(I, J\) may be empty. The rules \((Eucl)\) and \((Mon)\) require that \(s'\overset{I}{\sim}t\) and \(s\overset{I}{\sim}t\) be not in \(\Gamma\), accordingly.
\vspace{1 pt}
\end{itemize}
\vspace{1 pt}
The sequent calculus GS-LCK is sound and complete with respect to correlation models over \((R,\Sigma,\vec O)\). It also has the beautiful properties of rule invertibility and admissibility of the cut and structural rules. It is crucial in making the automated proof system in the present paper.
\begin{theoremA}[Properties of GS-LCK]\label{Theorem:Properties}
The sequent calculus GS-LCK has the following properties:
\begin{itemize}[itemsep=2pt,topsep=0pt,parsep=0pt,partopsep=0pt]
\item Invertibility of rules.
\item Admissibility of weakening.
\item Admissibility of contraction.
\item Admissibility of cut.
\item Termination.
\end{itemize}
\end{theoremA}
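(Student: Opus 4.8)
\textit{Proof strategy (outline).} The plan is to follow Negri's method for labelled sequent calculi, adapted to the relational, knowledge and observational rules of GS-LCK, and then to extract termination from the shape of those rules. I would establish the five items in the order in which they depend on one another: height-preserving admissibility of weakening (together with an auxiliary substitution lemma), then invertibility, then admissibility of contraction, then admissibility of cut, and finally termination.

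\textbf{Weakening, substitution, invertibility.} First I would show by induction on the height of derivations that adjoining an arbitrary labelled formula or relational atom to either side of the end-sequent is height-preserving admissible, and likewise that substituting one label for another throughout a derivation is height-preserving admissible. The only subtlety is that many rules carry side conditions of the form ``such-and-such atom is not already in $\Gamma$'' and that $(\Rightarrow K_{I})$ creates a fresh label; these are dealt with in the usual way, namely by simply skipping a rule application when the formula being weakened in is exactly the atom that rule would introduce (its premise already being the desired sequent), and by renaming clashing fresh labels via the substitution lemma. Invertibility then splits into two groups. The relational, knowledge, observational and substitution rules, as well as $(OYR)$, are ``full-context'' rules whose premise contains the whole antecedent of the conclusion (plus extra atoms, with the principal formula repeated) and the same succedent, so for these invertibility --- indeed height-preserving invertibility --- is immediate from admissibility of weakening. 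For the propositional rules and the modal right rules $(\Rightarrow K_{I})$, $(\Rightarrow K_{N})$ I would prove height-preserving invertibility by the standard induction on the height of the derivation of the conclusion, calling on the substitution lemma in the modal cases.

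\textbf{Contraction and cut.} Height-preserving admissibility of left and right contraction I would prove by simultaneous induction on derivation height, using height-preserving invertibility for the cases where the contracted occurrence is principal in a propositional rule. The design of GS-LCK pays off precisely when one contracts a relational atom or a labelled formula that is principal in a relational, knowledge, observational or substitution rule: because each such rule \emph{repeats} its principal formula in the premise, the induction hypothesis applies to the premise and the rule can be reapplied (again using the ``skip the rule'' device if a side condition would otherwise be violated). Admissibility of cut I would then prove by the customary primary induction on the weight of the cut formula and secondary induction on the sum of the heights of the two premise derivations; note that relational atoms never occur in succedents, so a cut formula is always a labelled formula. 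When the cut formula is principal in both premises one reduces, for $\neg,\vee,\wedge,\rightarrow$, to cuts on its immediate subformulas, and for $K_{I},K_{N}$ one combines the two sub-derivations with a cut of the same weight but strictly smaller height-sum on the modal formula followed by a cut of smaller weight on the side formula (using substitution and contraction to align labels and contexts); when the cut formula is not principal in some premise one permutes the cut upward past the last rule of that premise. The one extra point is the base case where a premise is an initial sequent of the form $s\!:\!o^{r_{1}},s\!:\!o^{r_{2}},\Gamma\Rightarrow\Delta$ with $r_{1}\neq r_{2}$; here one uses that an atomic antecedent formula is never erased going upward in a derivation, so if the cut is ``caused'' by $s\!:\!o^{r}$ together with some $s\!:\!o^{r'}\in\Gamma$, one simply deletes $s\!:\!o^{r}$ from the succedent throughout the other premise's derivation, every affected leaf becoming again an initial sequent of the same kind.

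\textbf{Termination.} This I expect to be the main obstacle, and I would argue it in three steps. (1) \emph{Only finitely many labels arise:} inspecting the rules, the sole rule introducing a label not already present is $(\Rightarrow K_{I})$, whose premise carries $t\!:\!A$ with $A$ a proper subformula of the principal $s\!:\!K_{I}A$, whereas the rules that read a $K$-formula on the left, $(K_{I}\Rightarrow)$ and $(K_{N}\Rightarrow)$, reuse an existing label; hence the labels generated along a branch form a tree of depth bounded by the modal depth of the root formulas and branching bounded by their number of $K$-subformulas, so finitely many labels occur, bounded in terms of the root sequent alone. (2) \emph{Everything else is finite:} $N$, $R$ and the sets $O_{I}$ are finite by hypothesis and the root has finitely many subformulas, so, with the label set now fixed and finite, the collection of all labelled formulas and relational atoms that can ever appear is finite. (3) \emph{Every rule application makes progress:} each relational, knowledge, observational and substitution rule has a side condition forbidding it to add an atom already present, so it strictly enlarges the antecedent toward a bounded saturation; each propositional left rule removes a compound formula; and the branching rules $(OYR)$, $(\vee\Rightarrow)$, $(\Rightarrow\wedge)$, $(\rightarrow\Rightarrow)$ are finitely branching. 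A lexicographic measure on sequents --- for instance, the number of atoms over the fixed finite universe not yet in the antecedent, then the number of connectives in antecedent formulas --- therefore strictly decreases with every rule application, so every branch of a proof-search tree is finite, and the tree, being finitely branching, is finite by K\"onig's lemma. The delicate point throughout is exactly step (1): the bound on label generation is what a carelessly formulated modal right rule would destroy, and one has to check that the interplay of the observational rules with the relational ones (reflexivity, transitivity, Euclideanness, monotonicity, $(OE)$) never reintroduces unboundedness.
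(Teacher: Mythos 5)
Your treatment of the first four properties follows essentially the same route as the paper: an auxiliary height-preserving substitution lemma, height-preserving weakening, invertibility, contraction by induction on a complexity/height pair using invertibility, and cut by the standard double induction on cut-formula weight and height sum. (The paper additionally proves a preliminary lemma on contraction of \emph{atomic} formulas before substitution, because the substitution case for $(Ref)$ can identify two relational atoms; your ``skip the rule'' device plays the same role. Your observation that invertibility of the repetition-style rules is immediate from weakening is a legitimate shortcut over the paper's rule-by-rule induction.)

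The genuine gap is in your termination argument, specifically step (1). The claim that the labels generated along a branch form a tree ``of depth bounded by the modal depth of the root formulas'' is false for this calculus. The rule $(K_{I}\Rightarrow)$ \emph{repeats} its principal formula $s:K_{I}A$ in the premise, and when $A$ contains a negative occurrence of some $K_{I}B$ (e.g.\ $A=\neg K_{I}B$), decomposing $t:A$ at a freshly created world $t$ puts $t:K_{I}B$ back into the succedent at full modal depth; $(\Rightarrow K_{I})$ then creates yet another world $t'$, the relational rules $(Trans)$/$(Eucl)$ make $t'$ accessible from $s$ again, $(K_{I}\Rightarrow)$ re-fires on the persistent $s:K_{I}A$, and the cycle repeats indefinitely. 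So unrestricted backward search does not terminate, no label bound in terms of modal depth exists, and consequently your ``fixed finite universe of atoms'' in steps (2)--(3) is not fixed, and your lexicographic measure also fails to decrease under $(K_{I}\Rightarrow)$ itself, which adds a compound formula $t:A$ to the antecedent. The paper closes exactly this gap with additional machinery: a permutation lemma for $(K_{I}\Rightarrow)$, a lemma that each principal pair of $(K_{I}\Rightarrow)$ need be used at most once per branch, and a lemma that $(\Rightarrow K_{I})$ need be iterated at most $n(K_{I})$ times (the number of negative occurrences of $K_{I}$) along any chain $s\overset{I}{\sim}s_{1},s_{1}\overset{I}{\sim}s_{2},\dots$ (following Negri); termination is then proved not for raw GS-LCK search but for the procedure GS-LCK-PROC, which enforces these bounds explicitly via the tables $TableLK$ and $TableRK$. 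Without some version of these regularity lemmas and an explicit loop check, your step (1) does not go through.
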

Proofs of soundness, completeness, and the properties of GS-LCK are given in the next sections.

\section{Proof of soundness of GS-LCK}
\label{sec:ProofOfSoundnessOfGSLCK}
\begin{definitionA}[Extended syntax]\label{definitionA:1}
Extended syntax of LCK is as follows:
\begin{displaymath}
A:=s:A_{1} \mid s\overset{I}{\sim}t \mid s:A_{1}\vee A \mid s:A_{1}\wedge A \mid s:A_{1} \rightarrow A 
\end{displaymath}
\begin{displaymath}
A_{1}:=p\mid o^{r} \mid \bot\mid \top \mid \neg A_{1} \mid A_{1}\vee A_{2} \mid A_{1}\wedge A_{2} \mid A_{1}\rightarrow A_{2} \mid K_{I}A_{1}
\end{displaymath}
where p is any atomic proposition, \(o \in O_{I}, I \subseteq N, r \in R\) and \(s, t \in S\).
\end{definitionA} 

\begin{definitionA}[Extended semantics]\label{definitionA:1}
If \(s, t, v\in \mathbf{S}\) and \(M \in \mathbf{M}\), then the truthfulness of the formula in the state \(v\) of the model \(M\) is defined as follows:    
\begin{itemize}[itemsep=2pt,topsep=0pt,parsep=0pt,partopsep=0pt]
\item \(v \models s:A\) \:\: iff \:\(s \models A\).
\item \(v \models s\overset{I}{\sim}t\) \: iff \:\(s\overset{I}{\sim}t \in \mathbf{R}\).
\end{itemize}
\end{definitionA} 
Commas "," in \(\Gamma\) of the sequent \(\Gamma \Rightarrow \Delta \) mean conjunction "\(\wedge\)", commas "," in \(\Delta\) - disjunction "\(\vee\)". The arrow "\(\Rightarrow\)" stands for implication "\(\rightarrow\)".
\begin{definitionA}[Formula of the sequent]\label{definitionA:1}
If Seq is a sequent \(\Gamma \Rightarrow \Delta \), then \emph{the formula of the sequent} \(F(Seq)\) is obtained by:
\begin{enumerate}[itemsep=2pt,topsep=0pt,parsep=0pt,partopsep=0pt]
\item[1)] putting \(\Gamma \) and \(\Delta \) in parentheses;
\item[2)] replacing empty \(\Gamma\) by \(s:\top\);
\item[3)] replacing empty \(\Delta\) by \(s:\bot\);
\item[4)] replacing commas "," by conjunction "\(\wedge\)" in \(\Gamma\);
\item[5)] replacing commas "," by disjunction "\(\vee\)" in \(\Delta\);
\item[6)] replacing "\(\Rightarrow\)" by implication "\(\rightarrow\)".
\end{enumerate}
\end{definitionA} 
\begin{exampleA} \( F(Seq) := ( t:A_{1} \wedge s:K_{I}A_{1}\wedge s\overset{I}{\sim}t \wedge  t:A_{2} ) \rightarrow ( s:B_{1} \vee t:B_{2} )\) is the formula of the sequent \(Seq := t:A_{1}, s:K_{I}A_{1}, s\overset{I}{\sim}t,  t:A_{2} \Rightarrow s:B_{1}, t:B_{2}\).
\end{exampleA} 
\begin{definitionA}[Sequent without labels and relational atoms]\label{definitionA:1} 
If Seq is a sequent, then the sequent without labels and relational atoms of Seq is obtained removing all labels near formulas and all relational atoms from Seq.
\end{definitionA} 
\begin{lemmaA}[Validity of the formula of the sequent]\label{lemmaA:1}
If the formula of the sequent Seq is valid, then the formula of the sequent Seq without labels and relational atoms is valid, as well.
\end{lemmaA}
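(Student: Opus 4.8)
The plan is to argue by contraposition: assuming the formula of the label-free, relation-free sequent is not valid, I build a counter-model for the formula of the original labelled sequent. Write $Seq = \Gamma \Rightarrow \Delta$, and let $Seq' = \Gamma' \Rightarrow \Delta'$ be $Seq$ with all relational atoms deleted and all labels stripped, so that $\Gamma'$ and $\Delta'$ are multisets of ordinary LCK formulas and $F(Seq')$ is (equivalent to) $\bigl(\bigwedge \Gamma'\bigr) \to \bigl(\bigvee \Delta'\bigr)$, with empty $\Gamma'$ read as $\top$ and empty $\Delta'$ as $\bot$. Suppose $F(Seq')$ is not valid; then there are a correlation model $M = (S, \{\overset{I}{\sim}\}_{I \subseteq N}, V)$ and a state $w \in S$ with $M, w \not\models F(Seq')$, i.e. $M, w \models C$ for every $C \in \Gamma'$ while $M, w \not\models D$ for every $D \in \Delta'$.

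Next I would pass to the extended semantics and refute $F(Seq)$ over $M$ by interpreting every state-label occurring in $Seq$ (and the auxiliary label that $F$ introduces when $\Gamma$ or $\Delta$ is empty) as the single state $w$. The evaluation state is irrelevant here, since by the extended semantics the truth of a labelled formula $s{:}A$ and of a relational atom $s\overset{I}{\sim}t$ does not depend on it. It then remains to check three things. First, each labelled formula $s{:}A$ of $\Gamma$ becomes $w \models A$, which holds because $A \in \Gamma'$. Second, each relational atom $s\overset{I}{\sim}t$ of $\Gamma$ becomes $w\overset{I}{\sim}w$, which holds because every relation $\overset{I}{\sim}$ of a correlation model is an equivalence relation and hence reflexive; together with the first point this shows the antecedent of $F(Seq)$ is satisfied (and trivially so if $\Gamma$ is empty, its antecedent being $s{:}\top$). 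Third, each labelled formula $s{:}B$ of $\Delta$ becomes $w \models B$, which fails because $B \in \Delta'$, so the succedent of $F(Seq)$ is falsified (and so is $s{:}\bot$ if $\Delta$ is empty). Hence $F(Seq)$ is false in $M$ under this label interpretation, so $F(Seq)$ is not valid, which is exactly what the contrapositive requires.

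The step that needs care — and the main obstacle — is the treatment of relational atoms under the collapse of all labels to $w$: a relational atom $s\overset{I}{\sim}t$ appearing in the \emph{succedent} would be sent to $w\overset{I}{\sim}w$, which is \emph{true}, and would spuriously satisfy the succedent of $F(Seq)$. This case does not occur for the sequents to which the lemma is applied: the axioms of GS-LCK contain no relational atoms, the root sequents of proof search have the form $\Rightarrow w{:}A$, and every relational, knowledge and observational rule introduces new relational atoms only into the antecedent; hence every sequent occurring in a GS-LCK derivation has its relational atoms in the antecedent only. I would therefore carry out the argument above under this placement hypothesis (equivalently, for sequents arising in proof search). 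Note that, apart from reflexivity, no model condition — symmetry, transitivity, monotonicity, the observability principle, vacuous information — is used, and the converse implication, which would require such conditions, is not claimed.
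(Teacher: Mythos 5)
Your proposal is correct and is essentially the paper's own argument in contrapositive form: both proofs collapse all labels to a single state, dispose of relational atoms via reflexivity, and rely on the observation that the GS-LCK rules place relational atoms only in the antecedent of the sequent. The only cosmetic difference is that the paper argues directly (validity over all label tuples restricts to the diagonal) while you build a diagonal countermodel from a countermodel of the unlabelled formula.
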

\begin{proof}\hspace*{1mm}\\
Suppose we have a set of states \(S\) of a model \(M\). For each formula of the sequent we have a tuple of its labels \((s_{1},...,s_{l})\in S \times ... \times S\). If the formula with labels \((s_{1},...,s_{l})\) is valid, then it is valid with substituted labels \((s',...,s')\), because \(\{(s',...,s'):s' \in S\} \subseteq \{(s_{1},...,s_{l}): s_{1},...,s_{l} \in S \}\). Having \(s \models s':A\), iff \(s' \models A\), we can remove the label \(s'\). \\
All relational atoms become \(s'\overset{I}{\sim}s', I \subseteq N\). They are valid because of reflexivity in models. Applying the rules of GS-LCK they appear only in the first argument of implication of the formula of the sequent. We can remove relational atoms, because having a valid formula \(( A_{1} \wedge ... \wedge  A_{l} ) \rightarrow ( B_{1} \vee ... \vee B_{k} ) \) and removing valid formula \(A_{i}\) from the first argument of implication, the validity is maintained.
\end{proof}

\begin{theoremA}[Soundness of GS-LCK]\label{theoremA:1}
If sequent S is provable in GS-LCK, then the formula of the sequent S without labels and relational atoms is valid with respect to correlation models over \((R,\Sigma,\vec O)\).
\end{theoremA}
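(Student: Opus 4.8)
The plan is to prove a slightly stronger, inductively robust statement and then descend to the theorem via Lemma~\ref{lemmaA:1}. Concretely, I would show: whenever a sequent $Seq := \Gamma\Rightarrow\Delta$ is derivable in GS-LCK, its formula of the sequent $F(Seq)$ is valid with respect to every correlation model over $(R,\Sigma,\vec O)$ in the extended semantics; i.e., for every such model $M$ and every assignment $\rho$ of states of $M$ to the labels occurring in $Seq$, if every member of $\Gamma$ is true under $\rho$ then some member of $\Delta$ is true under $\rho$. Granting this, the theorem is immediate: for a derivable $S$ the formula $F(S)$ is valid, and Lemma~\ref{lemmaA:1} then yields validity of the formula of $S$ with all labels and relational atoms removed. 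The strengthened claim is proved by induction on the height of the derivation. The base case treats the three axiom schemes: in $s{:}p,\Gamma\Rightarrow\Delta,s{:}p$ and $s{:}o^{r},\Gamma\Rightarrow\Delta,s{:}o^{r}$ the same labelled atom stands on both sides, so the implication is trivially true; in $s{:}o^{r_{1}},s{:}o^{r_{2}},\Gamma\Rightarrow\Delta$ with $r_{1}\neq r_{2}$ the antecedent is unsatisfiable, since a state fixes a unique value $s_{I}(o)$ for the joint observation $o$.

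For the inductive step I would check, rule by rule, that validity of the formula(s) of the premise(s) implies validity of the formula of the conclusion (each rule read top-down). The propositional rules are the routine Gentzen cases. The relational rules are precisely the frame conditions of correlation models: $(Ref)$ uses reflexivity of $\overset{I}{\sim}$, $(Trans)$ transitivity, $(Eucl)$ Euclideanness — jointly forcing each $\overset{I}{\sim}$ to be an equivalence — and $(Mon)$ uses monotonicity of information, $I\subseteq J \Rightarrow\, \overset{J}{\sim}\,\subseteq\,\overset{I}{\sim}$. The knowledge rules $(K_{I}\Rightarrow)$ and $(K_{N}\Rightarrow)$ merely unfold the clause $M,s\models K_{I}A$ iff $M,t\models A$ for all $t\overset{I}{\sim}s$ (the $K_{N}$ case additionally instantiating $t:=s$ by reflexivity), while the substitution rules use the observability principle $s\overset{N}{\sim}t\Rightarrow s=t$ for $(Sub(p)\Rightarrow)$ and the defining equivalence $s\overset{I}{\sim}t$ iff $s_{I}=t_{I}$ for $(Sub(o^{r})\Rightarrow)$. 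Among the observational rules, $(OE)$ is justified directly by that same definition of observational equivalence; $(OYR)$ is a branching rule justified by ``observations always yield results'' (under any assignment making $\Gamma$ true, $o$ takes some value $r\in R$, so the corresponding premise applies); and $(CR)$ reflects the result-composition principle and is justified by the definition of the local state $s_{I}$ in terms of $\Sigma$, invoking the condition $\Sigma\{\Sigma A_{k}:k\in K\}=\Sigma(\bigcup_{k\in K}A_{k})$ for pairwise disjoint families where needed. The rules requiring a contrapositive treatment are $(\Rightarrow K_{I})$ and $(\Rightarrow K_{N})$ with their eigenvariable side conditions: if the conclusion fails under some $M,\rho$ with $\rho(s)\not\models K_{I}A$, choose a witness $u$ with $u\overset{I}{\sim}\rho(s)$ and $u\not\models A$ and extend $\rho$ by $t\mapsto u$; since $t$ does not occur in the conclusion this leaves $\Gamma$ and $\Delta$ unaffected, while $s\overset{I}{\sim}t$ becomes true (symmetry of $\overset{I}{\sim}$) and $t:A$ false, so the premise fails.

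I expect the main obstacle to be the observational and substitution rules, where soundness rests on the exact interplay between the atomic formulas $o^{r}$, the induced local states $s_{I}$, and the possibly partial composition operator $\Sigma$ — in particular, confirming that $(CR)$ really is an instance of the $\Sigma$-condition and that the side conditions on $(OE)$, $(OYR)$ and $(CR)$ are consistent with the semantic reading — together with the bookkeeping required by the freshness conditions on $t$ in the $(\Rightarrow K_{I})$-type rules. Everything else is a routine induction on derivations, closing with the appeal to Lemma~\ref{lemmaA:1}.
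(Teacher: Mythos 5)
Your proposal is correct and takes essentially the same route as the paper: both establish validity of the three axiom schemes and soundness of each rule (the invertible ones read top-down, $(\Rightarrow K_{I})$ and $(\Rightarrow K_{N})$ by the contrapositive witness argument exploiting the freshness of $t$), and then descend to the label- and relational-atom-free formula via Lemma~\ref{lemmaA:1}. The only cosmetic difference is that where the paper justifies the axiom $s{:}o^{r_{1}},s{:}o^{r_{2}},\Gamma\Rightarrow\Delta$ and the rule $(CR)$ by citing the Hilbert axioms H11 and H14, you argue directly from the semantics of $s_{I}$ and $\Sigma$, which is equivalent.
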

\begin{proof}\hspace*{1mm}\\
We prove the validity of all axioms and soundness of all the rules of GS-LCK:
\begin{itemize}[label=$\bullet$]
\item Axioms:
\begin{itemize}[label=$-$]
\item Formula of the axiom \(s:p,\Gamma\Rightarrow s:p, \Delta\) is valid, because it is true in any state of any model. The same is for the axiom \(s:o^{r},\Gamma\Rightarrow s:o^{r}, \Delta\).
\item Validity of the formula of the axiom \(s:o^{r_{1}},s:o^{r_{2}},\Gamma\Rightarrow \Delta\), where \(r_{1}\neq r_{2}\), follows from the axiom "H11. \(o^{r} \rightarrow \neg o^{p}, \:\:\text{where}\:\: r \neq p \)".
\end{itemize}
\item Propositional rules as in \cite{Negri2001book}.
\item Knowledge rules:
\begin{itemize}[label=$-$]
\item Rule \((K_{I}\Rightarrow)\):
\begin{displaymath}
\frac{t:A, s:K_{I}A, s\overset{I}{\sim}t, \Gamma\Rightarrow\Delta}{s:K_{I}A, s\overset{I}{\sim}t,\Gamma\Rightarrow\Delta}\:\:( \: K_{I}\Rightarrow), \:\:\:\:\:  I\neq N. \:\:\:\:\:\:\:\:\:\:\:\:\:\:\:\:\:\:\:\:\:\:\:\:\:\:\:\:\:\:\:\:\:\:\:\:\:\:\:\:\:\:\:\:\:\:\:\:\:\:\:\:\:\:\:\:\:\:\:\:\:\:\:\:\:\:\:\:\:\:\:\:\:\:\:\:\:\:\:\:\:\:\:\:\:\:\:\:\:\:\:\:\:\:\:\:\:\:\:\:\:\:\:\:\:\:\:\:\:\:\:\:
\end{displaymath}
We prove by contraposition that, if the formula of the premise \((t:A, s:K_{I}A,\) \(s\overset{I}{\sim}t, \Gamma\Rightarrow\Delta)\) of the rule \((K_{I}\Rightarrow)\) is valid, then the formula of the conclusion \((s:K_{I}A, s\overset{I}{\sim}t,\Gamma\Rightarrow\Delta)\) is valid, too.
\\[1px]
The formula of the conclusion \((s:K_{I}A, s\overset{I}{\sim}t,\Gamma\Rightarrow\Delta)\) is false, when \(s:K_{I}A\), \(s\overset{I}{\sim}t\) and all formulas in \(\Gamma\) are true, and all formulas in \(\Delta\) are false. By semantic definition of the knowledge operator \(K_{I}\), formula \(A\) is true in all the states accessible from the state \(s\) by relation \(I\). States \(t\) are accessible from the state \(s\), because \(s\overset{I}{\sim}t\) is true, therefore the formula \(t:A\) is true. If \(t:A\), \(s:K_{I}A\), \(s\overset{I}{\sim}t\) and all formulas in \(\Gamma\) are true and all formulas in \(\Delta\) are false, then the formula of the premise \((t:A, s:K_{I}A, s\overset{I}{\sim}t, \Gamma\Rightarrow\Delta)\) is false.
\\*
\item Rule \(( \Rightarrow K_{I})\):
\begin{displaymath}
\frac{s\overset{I}{\sim}t, \Gamma\Rightarrow \Delta, t:A}{\Gamma\Rightarrow \Delta, s:K_{I}A}\:\:(\Rightarrow K_{I}), \:\:\:\:\: I\neq N\: \textit{and } t \textit{ is not in the conclusion}.\:\:\:\:\:\:\:\:\:\:\:\:\:\:\:\:\:\:\:\:\:\:\:
\end{displaymath}
The formula of conclusion \((\Gamma\Rightarrow \Delta, s:K_{I}A)\) is false, when all formulas in \(\Gamma\) are true and all formulas in \(\Delta\) and \(s:K_{I}A\) are false. If the formula \(s:K_{I}A\) is false, then there exists a state \(t\) accessible from state \(s\) by relation \(I\), where \(A\) is false. If \(s\overset{I}{\sim}t\) and all formulas in \(\Gamma\) are true and all formulas in \(\Delta\) and \(t:A\) are false, then the formula of the premise \((s\overset{I}{\sim}t, \Gamma\Rightarrow \Delta, t:A)\) is false.
\\[1px]
The label \(t\) cannot be in the conclusion, because we can get situations, where the formula of the premise \((s\overset{I}{\sim}t, \Gamma\Rightarrow \Delta, t:A)\) is valid and the formula of the conclusion \((\Gamma\Rightarrow \Delta, s:K_{I}A)\) is not. An example:
\begin{displaymath}
\frac{s\overset{I}{\sim}t, t:A \Rightarrow t:A}{t:A \Rightarrow s:K_{I}A}  \:\:(\Rightarrow K_{I}) \:\:\:\:\:\:\:\:\:\:\:\:\:\:\:\:\:\:\:\:\:\:\:\:\:\:\:\:\:\:\:\:\:\:\:\:\:\:\:\:\:\:\:\:\:\:\:\:\:\:\:\:\:\:\:\:\:\:\:\:\:\:\:\:\:\:\:\:\:\:\:\:\:\:\:\:\:\:\:\:\:\:\:\:\:\:\:\:\:\:\:\:\:
\end{displaymath}
\item The validity of the rules \((K_{N}\Rightarrow)\) and \(( \Rightarrow K_{N})\) is proved in the same way.
\end{itemize}
\item Observational rules:
\begin{itemize}[label=$-$]
\item Rule \((OYR)\):
\begin{displaymath}
\frac{\{s:o^r,\Gamma \Rightarrow \Delta \}_{r\in R}}{\Gamma\Rightarrow\Delta} \:\:(OYR) \:\:\:\:\:\:\:\:\:\:\:\:\:\:\:\:\:\:\:\:\:\:\:\:\:\:\:\:\:\:\:\:\:\:\:\:\:\:\:\:\:\:\:\:\:\:\:\:\:\:\:\:\:\:\:\:\:\:\:\:\:\:\:\:\:\:\:\:\:\:\:\:\:\:\:\:\:\:\:\:\:\:\:\:\:\:\:\:\:\:\:\:\:\:\:\:\:\:\:\:
\end{displaymath}
If \(R\) is a set of results, and \(o\) is a joint observation, then there exists a result \(r \in R\) that \(o^{r}\) is true. If there exists \(r\) that \(o^{r}\) is true and all formulas in \(\Gamma\) are true and all formulas in \(\Delta\) are false, then one formula of premises \((\{s:o^r,\Gamma \Rightarrow \Delta \}_{r\in R})\) is false.
\item Rule \((CR)\):
\begin{displaymath}
\frac{s:e^{\Sigma \{r_{o}:o \in \bar e\}}, \{s:o^{r_{o}}\}_{o\in \bar e}, \Gamma \Rightarrow \Delta}{\{s:o^{r_{o}} \}_{o\in \bar e},\Gamma\Rightarrow\Delta}\:\:(CR) \:\:\:\:\:\:\:\:\:\:\:\:\:\:\:\:\:\:\:\:\:\:\:\:\:\:\:\:\:\:\:\:\:\:\:\:\:\:\:\:\:\:\:\:\:\:\:\:\:\:\:\:\:\:\:\:\:\:\:\:\:\:\:\:\:\:
\end{displaymath}
The contraposition is proved by the axiom "H14. \(\underset{o\in \bar e}{\wedge} o^{r_{o}} \rightarrow e^{\Sigma \{r_{o}:o \in \bar e\}}\)".
\item The soundness of rules \((OE)\), \(( \: Sub(p) \Rightarrow)\) and \(( \: Sub(o^{r}) \Rightarrow)\) is proved in the same way.
\end{itemize}
\item Relational rules:
\begin{itemize}[label=$-$]
\item Rule \((Mon)\):
\begin{displaymath}
\frac{  s\overset{I}{\sim}t, s\overset{J}{\sim}t, \Gamma\Rightarrow\Delta}{s\overset{J}{\sim}t,\Gamma\Rightarrow\Delta}\:\:(Mon) \:\:\:\:\:\:\:\:\:\:\:\:\:\:\:\:\:\:\:\:\:\:\:\:\:\:\:\:\:\:\:\:\:\:\:\:\:\:\:\:\:\:\:\:\:\:\:\:\:\:\:\:\:\:\:\:\:\:\:\:\:\:\:\:\:\:\:\:\:\:\:\:\:\:\:\:\:\:\:\:\:\:\:\:\:\:\:\:\:\:\:\:\:\:\:\:\:\:\:\:\:\:\:\:\:\:\:\:
\end{displaymath}
The contraposition follows from condition to models of LCK: 2. If \(I \subseteq J\) then \(\overset{J}{\sim} \subseteq \overset{I}{\sim} \).
\item The validity of rules \((Ref)\), \((Trans)\) and \((Eucl)\) is proved in the same way.
\end{itemize}
\end{itemize}
We have proved the validity of all axioms and soundness of all the rules of GS-LCK. The statement of the theorem follows from lemma \ref{lemmaA:1}.
\end{proof}

\section{Proof of the properties of GS-LCK}
\label{sec:ProofOfThePropertiesOfGSLCK}

\begin{lemmaA}[Admissibility of contraction with atomic formulas]\label{Lemma:AdmissibilityOfContractionAtomic}
\hspace*{1mm}\\
If a sequent \((\Pi_{atomic},\Pi_{atomic}, \Gamma \Rightarrow \Delta, \Lambda_{atomic}, \Lambda_{atomic})\) is provable in GS-LCK, then the sequent \((\Pi_{atomic},\Gamma \Rightarrow \Delta, \Lambda_{atomic})\) is also provable with the same bound of the height of the proof in GS-LCK. \(\Gamma, \Delta\) are any multisets of formulas. \(\Pi_{atomic}, \Lambda_{atomic}\) are any multisets of atomic formulas \(s:p, s:o^{r}, s\overset{I}{\sim}t\).
\end{lemmaA}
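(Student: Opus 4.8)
The plan is to argue by induction on the height $n$ of the GS-LCK derivation, after first reducing the statement to a single contracted occurrence. It suffices to prove the following: \emph{if $C$ is an atomic formula (of the form $s:p$, $s:o^{r}$ or $s\overset{I}{\sim}t$) and the sequent $(C,C,\Gamma\Rightarrow\Delta)$ has a GS-LCK derivation of height $n$, then $(C,\Gamma\Rightarrow\Delta)$ has a derivation of height at most $n$}, together with the symmetric statement for an atom $s:p$ or $s:o^{r}$ duplicated in the succedent. Iterating this over the atoms of $\Pi_{atomic}$ and of $\Lambda_{atomic}$ then yields the lemma.

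In the base case the derivation is an axiom. Each of the three axiom schemes of GS-LCK requires only a fixed supply of designated atoms on each side and absorbs all remaining formulas into $\Gamma$ or $\Delta$; deleting one copy of a duplicated atomic formula therefore still leaves an instance of the same scheme, so $(C,\Gamma\Rightarrow\Delta)$ is again an axiom of height $0$.

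For the induction step, let $R$ be the last rule of the given derivation of $(C,C,\Gamma\Rightarrow\Delta)$. Two features of GS-LCK make the step essentially routine: no atomic formula and no relational atom is ever the principal formula of a propositional or knowledge rule; and none of the observational, substitution and relational rules \emph{deletes} a context formula when passing to a premise --- these rules only \emph{adjoin} new atoms. Hence the two marked copies of $C$ occur in the context of every premise of $R$, while the principal formula and the atoms triggering $R$ are left untouched. Applying the induction hypothesis to each premise of $R$ (there are one or two) contracts the two copies to one without increasing the height; re-applying $R$ to the resulting premise(s) then yields a derivation of $(C,\Gamma\Rightarrow\Delta)$ of height at most $n$ --- provided $R$'s side conditions are still satisfied after the contraction.

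Checking that the side conditions survive is the only delicate point, and I expect it to be the main obstacle. Every side condition of GS-LCK has one of two shapes. (i) ``$X$ be not in $\Gamma$'' (or in $\Delta$), where $X$ is an atom generated by $R$; since contraction merely removes occurrences from the context, an $X$ absent before the contraction is still absent afterwards. (ii) A label condition such as ``$s$ be in the conclusion'' for $(Ref)$ or ``$t$ be not in the conclusion'' for $(\Rightarrow K_{I})$ and $(\Rightarrow K_{N})$; contracting a duplicated atom keeps one copy and so removes no label at all from the sequent, whence ``occurs in the conclusion'' is preserved, and ``does not occur in the conclusion'' is preserved because the conclusion only shrinks. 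Thus all side conditions remain in force, the induction closes, and the height bound is exactly the one claimed. Note that, unlike the proof of contraction for arbitrary formulas, this argument needs no appeal to invertibility of rules, precisely because atomic formulas are never principal.
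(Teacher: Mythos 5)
Your proof is correct and follows essentially the same route as the paper's: induction on the height of the derivation with a case split on the last rule, exploiting the fact that atomic context formulas are never erased when passing to a premise (the knowledge, observational, substitution and relational rules are cumulative, and the propositional rules only delete non-atomic principal formulas), so the duplicate survives into the premises, the induction hypothesis applies, and the rule is re-applied. Your explicit check that the side conditions survive contraction, and your reduction to a single contracted atom, are presentational refinements of what the paper handles via its ``principal pair in $\Pi_{atomic}$ or not'' case split, not a different argument.
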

\begin{proof}\hspace*{1mm}\\
Lemma \ref{Lemma:AdmissibilityOfContractionAtomic} is proved by induction on the height \(<h>\) of the proof of the sequent \((\Pi_{atomic},\Pi_{atomic},\Gamma \Rightarrow \Delta, \Lambda_{atomic}, \Lambda_{atomic})\).

\(<h=1>\)

If the sequent \((\Pi_{atomic},\Pi_{atomic},\Gamma \Rightarrow \Delta, \Lambda_{atomic}, \Lambda_{atomic})\) is an axiom, then the sequent \((\Pi_{atomic},\Gamma \Rightarrow \Delta, \Lambda_{atomic})\) is an axiom too.

\(<h>1>\)
\begin{itemize}[itemsep=2pt,topsep=0pt,parsep=0pt,partopsep=0pt]
\item The rule \((K_{I}\Rightarrow)\) was applied in the last step of the proof of the sequent.\\
\begin{itemize}[label=$-$,itemsep=2pt,topsep=0pt,parsep=0pt,partopsep=0pt]
\item One or two formulas of the principal pair is in \(\Pi_{atomic}\). 
\begin{displaymath}
\frac{t:A, s:K_{I}A, s\overset{I}{\sim}t, s\overset{I}{\sim}t, \Pi'_{atomic}, \Pi'_{atomic}, \Gamma'\Rightarrow\Delta, \Lambda_{atomic}, \Lambda_{atomic}}{s:K_{I}A, s\overset{I}{\sim}t, s\overset{I}{\sim}t, \Pi'_{atomic}, \Pi'_{atomic}, \Gamma'\Rightarrow\Delta, \Lambda_{atomic}, \Lambda_{atomic}}\:\:( \: K_{I}\Rightarrow) \:\:\:\:\:\:\:\:\:\:\:\:\:\:\:\:\:\:\:\:\:\:\:\:\:\:\:\:\:\:\:\:\:\:\:\:\:\:\:\:\:\:\:\:\:\:\:\:\:\:\:\:\:\:\:\:\:\:\:\:\:\:\:\:\:\:\:\: \:\:\:\:\:\:\:\:\:\:\:\:\:\:\:\:\:\:\:\:\:\:\:\:\:\:\:\:\:\:\:\:\:\:\:\:\: 
\end{displaymath}
The height of the proof of the premise of application of the rule \((K_{I}\Rightarrow)\) reduced to \(<h-1>\). By the induction hypothesis the sequent \((t:A, s:K_{I}A, s\overset{I}{\sim}t, \Pi'_{atomic}, \Gamma'\Rightarrow\Delta, \Lambda_{atomic})\) is provable with the height \(h'\), where \(h' \leq h-1\). The sequent of the lemma is proved by applying the rule \((K_{I}\Rightarrow)\):
\begin{displaymath}
\frac{t:A, s:K_{I}A, s\overset{I}{\sim}t, \Pi'_{atomic}, \Gamma'\Rightarrow\Delta, \Lambda_{atomic}}{s:K_{I}A, s\overset{I}{\sim}t, \Pi'_{atomic}, \Gamma'\Rightarrow\Delta, \Lambda_{atomic}}\:\:( \: K_{I}\Rightarrow) \:\:\:\:\:\:\:\:\:\:\:\:\:\:\:\:\:\:\:\:\:\:\:\:\:\:\:\:\:\:\:\:\:\:\:\:\:\:\:\:\:\:\:\:\:\:\:\:\:\:\:\:\:\:\:\:\:\:\:\:\:\:\:\:\:\:\:\: \:\:\:\:\:\:\:\:\:\:\:\:\:\:\:\:\:\:\:\:\:\:\:\:\:\:\:\:\:\:\:\:\:\:\:\:\: 
\end{displaymath}
Other cases are prooved in a similar way.
\vspace{2mm}
\item Any formula of the principal pair is not in \(\Pi_{atomic}\).
\begin{displaymath}
\frac{t:A, s:K_{I}A, s\overset{I}{\sim}t, \Pi_{atomic}, \Pi_{atomic}, \Gamma'\Rightarrow\Delta, \Lambda_{atomic}, \Lambda_{atomic}}{s:K_{I}A, s\overset{I}{\sim}t, \Pi_{atomic}, \Pi_{atomic}, \Gamma'\Rightarrow\Delta, \Lambda_{atomic}, \Lambda_{atomic}}\:\:( \: K_{I}\Rightarrow) \:\:\:\:\:\:\:\:\:\:\:\:\:\:\:\:\:\:\:\:\:\:\:\:\:\:\:\:\:\:\:\:\:\:\:\:\:\:\:\:\:\:\:\:\:\:\:\:\:\:\:\:\:\:\:\:\:\:\:\:\:\:\:\:\:\:\:\: \:\:\:\:\:\:\:\:\:\:\:\:\:\:\:\:\:\:\:\:\:\:\:\:\:\:\:\:\:\:\:\:\:\:\:\:\: 
\end{displaymath}
By the induction hypothesis the sequent \((t:A, s:K_{I}A, s\overset{I}{\sim}t, \Pi_{atomic}, \Gamma'\Rightarrow\Delta, \Lambda_{atomic})\) is provable with the height \(h'\), where \(h' \leq h-1\). The sequent of the lemma is proved by applying the rule \((K_{I}\Rightarrow)\):
\begin{displaymath}
\frac{t:A, s:K_{I}A, s\overset{I}{\sim}t, \Pi_{atomic}, \Gamma'\Rightarrow\Delta, \Lambda_{atomic}}{s:K_{I}A, s\overset{I}{\sim}t, \Pi_{atomic}, \Gamma'\Rightarrow\Delta, \Lambda_{atomic}}\:\:( \: K_{I}\Rightarrow) \:\:\:\:\:\:\:\:\:\:\:\:\:\:\:\:\:\:\:\:\:\:\:\:\:\:\:\:\:\:\:\:\:\:\:\:\:\:\:\:\:\:\:\:\:\:\:\:\:\:\:\:\:\:\:\:\:\:\:\:\:\:\:\:\:\:\:\: \:\:\:\:\:\:\:\:\:\:\:\:\:\:\:\:\:\:\:\:\:\:\:\:\:\:\:\:\:\:\:\:\:\:\:\:\: 
\end{displaymath}
\end{itemize}
\item The cases of the remaining rules are considered similarly.\\
\end{itemize}
\end{proof}

\begin{lemmaA}[Substitution]\label{Lemma:Substitution}
If a sequent \((\Gamma \Rightarrow \Delta)\) is provable in GS-LCK, then the sequent \((\Gamma (t/s) \Rightarrow \Delta (t/s))\) is also provable with the same bound of the height of the proof in GS-LCK.
\end{lemmaA}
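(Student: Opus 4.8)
The plan is to prove the lemma by induction on the height $h$ of the given derivation of $\Gamma \Rightarrow \Delta$ in GS-LCK. To avoid a clash with the labels $s,t$ occurring in the rules, I would write the substitution as $(y/x)$: it replaces every occurrence of the label $x$ by the label $y$, acting on relational atoms by $u\overset{I}{\sim}v \mapsto u(y/x)\overset{I}{\sim}v(y/x)$ and on labelled formulas by $u:A \mapsto u(y/x):A$, and is extended componentwise to multisets and sequents; in particular it commutes with all the formula constructors and with $F(\cdot)$. For the base case $h=1$ the sequent is an axiom, and each of the three axiom schemes is closed under $(y/x)$: a matching pair $u:p$ (resp. $u:o^{r}$) on both sides of $\Rightarrow$ becomes the matching pair $u(y/x):p$ (resp. $u(y/x):o^{r}$), and $u:o^{r_{1}},u:o^{r_{2}}$ with $r_{1}\neq r_{2}$ becomes $u(y/x):o^{r_{1}},u(y/x):o^{r_{2}}$ still with $r_{1}\neq r_{2}$.

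For the inductive step I would split on the last rule. For every rule other than $(\Rightarrow K_{I})$ — that is, all propositional rules, $(K_{I}\Rightarrow)$, $(K_{N}\Rightarrow)$, $(\Rightarrow K_{N})$, $(OE)$, $(OYR)$, $(CR)$, $(Sub(p)\Rightarrow)$, $(Sub(o^{r})\Rightarrow)$, and all relational rules — every label occurring in a premise already occurs in the conclusion, so applying $(y/x)$ to the whole inference yields a genuine instance of the same rule whose premises are the $(y/x)$-images of the original ones; by the induction hypothesis those images have derivations of height $\leq h-1$, and reapplying the rule gives $(\Gamma\Rightarrow\Delta)(y/x)$ with height $\leq h$. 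The only thing to verify is that the side conditions of the form ``$\ldots$ not in $\Gamma$'' (resp. ``$\ldots$ not in $\Delta$'') survive $(y/x)$, since substitution can falsify them by identifying two previously distinct labels. When the affected condition concerns an atom $u:p$, $u:o^{r}$, or a relational atom $u\overset{I}{\sim}v$ — the cases of $(OE)$, $(OYR)$, $(CR)$, the two $(Sub)$-rules, $(Ref)$, $(Trans)$, $(Eucl)$, $(Mon)$ — the intended premise after substitution merely contains that atom twice, so $(\Gamma\Rightarrow\Delta)(y/x)$ follows directly from the induction-hypothesis derivation of the premise by Lemma~\ref{Lemma:AdmissibilityOfContractionAtomic}.

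The rule $(\Rightarrow K_{I})$ introduces an eigenvariable $u$ not occurring in its conclusion, so here I would first use the induction hypothesis to rename $u$: apply $(u'/u)$ to the premise derivation, where $u'$ is chosen to occur nowhere in that derivation and to be distinct from both $x$ and $y$; this is still a derivation of height $\leq h-1$, now with eigenvariable $u'$. Then apply the induction hypothesis again with $(y/x)$ — which leaves $u'$ fixed and cannot reintroduce it — and finally reapply $(\Rightarrow K_{I})$ with eigenvariable $u'$, whose freshness condition holds by construction; the constraint $I\neq N$ is inherited. The composition of the two substitutions keeps the height at $\leq h-1$, so the result has height $\leq h$.

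The step I expect to be the main obstacle is the non-repetition condition of $(K_{I}\Rightarrow)$, $(K_{N}\Rightarrow)$ and $(\Rightarrow K_{N})$: there the formula named by the side condition (e.g. $t:A$ in the premise of $(K_{I}\Rightarrow)$, with $A$ an \emph{arbitrary} formula) can be compound, so when $(y/x)$ identifies two labels and the condition fails, the premise after substitution contains a duplicated compound labelled formula that the desired conclusion does not, and we would have to delete one copy of it from the induction-hypothesis derivation — i.e. invoke height-preserving contraction on a compound formula, which is not yet available at this point. The way around this is to observe that in each such case the duplicated formula is a proper subformula of the principal knowledge formula of the rule, so the needed contraction can be supplied by a subsidiary induction on formula complexity nested inside the height induction; equivalently, one runs a single induction that establishes height-preserving substitution and height-preserving contraction simultaneously. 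Granting this, reapplying the rule to the substituted premise completes the inductive step.
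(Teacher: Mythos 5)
Your proof follows essentially the same route as the paper's: induction on the height of the derivation, axioms closed under substitution, the eigenvariable of \((\Rightarrow K_{I})\) renamed to a fresh label by a preliminary use of the induction hypothesis before the substitution \((y/x)\) is applied, and failures of the atomic side conditions (as in \((Ref)\), \((OE)\), the \(Sub\)-rules, etc.) repaired by Lemma~\ref{Lemma:AdmissibilityOfContractionAtomic}. Where you go beyond the paper is in isolating the side conditions of \((K_{I}\Rightarrow)\), \((K_{N}\Rightarrow)\) and \((\Rightarrow K_{N})\), which forbid repetition of a possibly \emph{compound} formula \(t:A\): the paper buries these under ``considered similarly,'' even though only atomic contraction is available at this stage (full contraction is proved later and, via weakening and invertibility, ultimately rests on this very lemma). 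You are right that this is the genuinely delicate case; your proposed repair --- a simultaneous induction yielding height-preserving substitution and contraction together, or a nested induction on the complexity of the duplicated subformula --- is the standard way to close it, though as written it is only a sketch, and one must check that the invertibility steps it requires do not reintroduce a circular dependence on substitution. Modulo that, your argument is correct and coincides with the paper's.
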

\begin{proof}\hspace*{1mm}\\
Lemma \ref{Lemma:Substitution} is proved by induction on the height \(<h>\) of the proof of the sequent \((\Gamma \Rightarrow \Delta)\).

\(<h=1>\)

If the sequent \((\Gamma \Rightarrow \Delta)\) is an axiom, then the sequent \((\Gamma (t/s) \Rightarrow \Delta (t/s))\) is an axiom as well.

\(<h>1>\)
\begin{itemize}[itemsep=2pt,topsep=0pt,parsep=0pt,partopsep=0pt]
\item The rule \((\Rightarrow K_{I})\) was applied in the last step of the proof of the sequent.
\begin{displaymath}
\frac{s\overset{I}{\sim}t, \Gamma\Rightarrow \Delta, t:A}{\Gamma\Rightarrow \Delta, s:K_{I}A}\:\:(\Rightarrow K_{I}) 
\end{displaymath}
\begin{itemize}[label=$-$,itemsep=2pt,topsep=0pt,parsep=0pt,partopsep=0pt]
\item Substitution \((l/z)\).

\vspace{10 pt}
By the induction hypothesis the sequent \((s\overset{I}{\sim}t, \Gamma (l/z) \Rightarrow \Delta (l/z), t:A)\) is provable with the height \(h'\), where \(h' \leq h-1\). The sequent of the lemma is proved by applying the rule \((\Rightarrow K_{I})\):
\begin{displaymath}
\frac{s\overset{I}{\sim}t, \Gamma (l/z) \Rightarrow \Delta (l/z), t:A}{\Gamma (l/z) \Rightarrow \Delta (l/z), s:K_{I}A}\:\:(\Rightarrow K_{I}) 
\end{displaymath}
\item Substitution \((l/t)\).

\vspace{10 pt}
There is no label \(t\) in the sequent \(\Gamma\Rightarrow \Delta, s:K_{I}A\) because of the requirement of the application of the rule \((\Rightarrow K_{I})\) that \(t\) is a new label.\\
\item Substitution \((l/s)\) and \(l \neq t\).

\vspace{10 pt}
By the induction hypothesis the sequent \((l\overset{I}{\sim}t, \Gamma (l/s) \Rightarrow \Delta (l/s), t:A)\) is provable with the height \(h'\), where \(h' \leq h-1\). The sequent of the lemma is proved by applying the rule \((\Rightarrow K_{I})\):
\begin{displaymath}
\frac{l\overset{I}{\sim}t, \Gamma (l/s) \Rightarrow \Delta (l/s), t:A}{\Gamma (l/s) \Rightarrow \Delta (l/s), l:K_{I}A}\:\:(\Rightarrow K_{I}) 
\end{displaymath}
\item Substitution \((l/s)\) and \(l = t\).

\vspace{10 pt}
By the induction hypothesis with substitution \((w/t)\), the sequent \((s\overset{I}{\sim}w, \Gamma \Rightarrow \Delta, w:A)\) is provable with the height \(h'\), where \(h' \leq h-1\). The label \(w\) is a new label absent in the sequent. By the inducion hypothesis with substitution \((l/s)\), the sequent \((l\overset{I}{\sim}w, \Gamma (l/s) \Rightarrow \Delta (l/s), w:A)\) is provable with the height \(h''\), where \(h'' \leq h-1\). The sequent of the lemma is proved by applying the rule \((\Rightarrow K_{I})\):
\begin{displaymath}
\frac{l\overset{I}{\sim}w, \Gamma (l/s) \Rightarrow \Delta (l/s), w:A}{\Gamma (l/s) \Rightarrow \Delta (l/s), l:K_{I}A}\:\:(\Rightarrow K_{I}) 
\end{displaymath}
\end{itemize}
\item The rule \((Ref)\) was applied in the last step of the proof of the sequent.
\begin{displaymath}
\frac{ s\overset{I}{\sim}s, \Gamma\Rightarrow\Delta}{\Gamma\Rightarrow\Delta}\:\:(Ref) \:\:
\end{displaymath}
\begin{itemize}[label=$-$,itemsep=2pt,topsep=0pt,parsep=0pt,partopsep=0pt]
\item Substitution \((s/t)\), and relational atom \(s\overset{I}{\sim}t\) is in \(\Gamma\).

\vspace{10 pt}
By the induction hypothesis, the sequent \((s\overset{I}{\sim}s, s\overset{I}{\sim}s, \Gamma (s/t) \Rightarrow \Delta (s/t))\) is provable with the height \(h'\), where \(h' \leq h-1\). The sequent of the lemma is proved by applying Lemma \ref{Lemma:AdmissibilityOfContractionAtomic}.\\
\item Other substitutions are considered in a similar way.\\
\end{itemize}
\item The cases of the remaining rules are considered similarly.\\
\end{itemize}
\end{proof}

\begin{theoremA}[Admissibility of weakening]\label{Theorem:AdmissibilityOfWeakening}
If a sequent \((\Gamma \Rightarrow \Delta)\) is provable in GS-LCK, then a sequent \((\Pi,\Gamma \Rightarrow \Delta, \Lambda)\) is provable with the same bound of the height of the proof in GS-LCK, too. \(\Pi, \Gamma, \Delta, \Lambda\) are any multisets of formulas.
\end{theoremA}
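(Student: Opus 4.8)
The plan is to prove Theorem~\ref{Theorem:AdmissibilityOfWeakening} by induction on the height $h$ of the given proof of $\Gamma\Rightarrow\Delta$, establishing the sharper statement that for \emph{arbitrary} multisets $\Pi,\Lambda$ a proof of $\Pi,\Gamma\Rightarrow\Delta,\Lambda$ exists of height at most $h$. For $h=1$ the sequent $\Gamma\Rightarrow\Delta$ is one of the three axiom shapes of GS-LCK; each is an instance of an axiom schema in which arbitrary context multisets already occur on both sides (``$s{:}p,\Gamma'\Rightarrow\Delta',s{:}p$'', ``$s{:}o^r,\Gamma'\Rightarrow\Delta',s{:}o^r$'', and ``$s{:}o^{r_1},s{:}o^{r_2},\Gamma'\Rightarrow\Delta'$'' with $r_1\neq r_2$), so adding $\Pi$ on the left and $\Lambda$ on the right again yields an instance of the same axiom, and the weakened sequent has a proof of height $1$.

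For the induction step I case on the last rule $R$ of the proof. In the routine situation the weakening formulas simply ride along in the side contexts of $R$: I apply the induction hypothesis to every premise of $R$, enlarging its antecedent by $\Pi$ and its succedent by $\Lambda$, to obtain proofs of height at most $h-1$, and then reapply $R$, getting $\Pi,\Gamma\Rightarrow\Delta,\Lambda$ of height at most $h$. This disposes of all propositional rules at once (using, in $(\vee\Rightarrow)$, $(\Rightarrow\wedge)$ and $(\rightarrow\Rightarrow)$, that the two premises share their context), and it also disposes of every other rule whenever $\Pi$ and $\Lambda$ do not interfere with its side condition.

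The genuine difficulty, and where I expect essentially all the work to lie, is the eigenvariable condition of $(\Rightarrow K_I)$ and $(\Rightarrow K_N)$ together with the ``context-closure'' side conditions that several rules impose. For $(\Rightarrow K_I)$ and $(\Rightarrow K_N)$ the premise uses a label $t$ that must be absent from the conclusion; if $t$ occurs in $\Pi$ or $\Lambda$ I first invoke the Substitution Lemma~\ref{Lemma:Substitution} to rename $t$ throughout the subproof to a label $t'$ fresh for $\Gamma,\Delta,\Pi,\Lambda$ (this is height-preserving) and only then apply the induction hypothesis and reapply the rule. For the rules whose side condition forbids a designated atom or formula $C$ from already occurring in the remaining context --- $(K_I\Rightarrow)$, $(K_N\Rightarrow)$, $(OYR)$, $(CR)$, $(Sub(p)\Rightarrow)$, $(Sub(o^r)\Rightarrow)$, $(Ref)$, $(Trans)$, $(Eucl)$, $(Mon)$, and $(OE)$ --- I split on whether the weakening already supplies such a $C$. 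In every one of these rules except $(OE)$ the forbidden $C$ is precisely the formula that $R$ introduces into a premise, so if $\Pi$ (resp.\ $\Lambda$) contains $C$ then $\Pi,\Gamma\Rightarrow\Delta,\Lambda$ already contains $C$, and the induction hypothesis applied to the relevant premise of $R$ with the weakening $\Pi\setminus\{C\}$ (resp.\ $\Lambda\setminus\{C\}$) returns, as a multiset, exactly $\Pi,\Gamma\Rightarrow\Delta,\Lambda$: the rule is simply absorbed into the induction hypothesis and nothing more is needed, except that for $(Ref)$ and $(\Rightarrow K_N)$ the reflexive atom $s\overset{N}{\sim}s$ of the premise may have to be discharged afterwards by one more application of $(Ref)$ (or absorbed likewise). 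The one remaining subtlety is $(OE)$, whose side condition also forbids the observation atoms $s{:}o^{r_o}$, $t{:}o^{r_o}$ that the rule already displays in its conclusion: if the weakening supplies a spare copy of one of them then $(OE)$ cannot be reapplied verbatim, and I would handle this subcase by a short supplementary argument, deriving the weakened conclusion from a weakening of the premise and eliminating the surplus occurrences with the atomic-contraction Lemma~\ref{Lemma:AdmissibilityOfContractionAtomic}. Since every branch of this analysis uses $R$ at most once more (or not at all) and the appeals to Lemmas~\ref{Lemma:Substitution} and~\ref{Lemma:AdmissibilityOfContractionAtomic} preserve height, the bound $h$ is maintained; making this last case analysis airtight --- uniformly over all the side-conditioned rules and with the height bound intact --- is the main obstacle.
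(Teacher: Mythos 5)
Your induction on the height of the derivation, the base case, and in particular your handling of the eigenvariable condition of \((\Rightarrow K_{I})\) --- renaming the fresh label via Lemma~\ref{Lemma:Substitution} before invoking the induction hypothesis and reapplying the rule --- coincide exactly with the paper's proof, which works out only this case and dismisses everything else with ``the remaining rules are considered similarly.'' You go further than the paper in confronting the closure side conditions of the other rules, and your absorption argument is sound wherever the forbidden formula \(C\) is precisely the one the rule adds to its premise: for \((K_{I}\Rightarrow)\), \((K_{N}\Rightarrow)\), \((CR)\), the substitution and relational rules, and \((OYR)\), the relevant premise weakened by \(\Pi\setminus\{C\}\) already \emph{is} the target sequent, so nothing needs to be reapplied and the height bound is trivially kept.

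However, your patch for the \((OE)\) clash does not work as described. Suppose \(\Pi\) contains a copy of a displayed atom, say \(s:o_{1}^{r_{o_{1}}}\). The target \(\Pi,\{s:o^{r_{o}}\}_{o\in O_{I}},\{t:o^{r_{o}}\}_{o\in O_{I}},\Gamma\Rightarrow\Delta,\Lambda\) is a multiset containing \(s:o_{1}^{r_{o_{1}}}\) \emph{twice}, and no legal \((OE)\) instance has it as conclusion. Weakening the premise and then invoking Lemma~\ref{Lemma:AdmissibilityOfContractionAtomic} removes the surplus copy, so after reapplying \((OE)\) you obtain \(\Pi\setminus\{s:o_{1}^{r_{o_{1}}}\},\{s:o^{r_{o}}\},\{t:o^{r_{o}}\},\Gamma\Rightarrow\Delta,\Lambda\) --- the contracted sequent, not the weakened one; contraction runs in exactly the wrong direction for this purpose. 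Absorption fails here as well, because the premise of \((OE)\) differs from its conclusion by the relational atom \(s\overset{I}{\sim}t\) rather than by the clashing observation atom, so the weakened premise always carries a spurious \(s\overset{I}{\sim}t\) that no rule of the calculus can subsequently discharge. This subcase therefore still needs a genuinely different argument (the paper's own proof is silent on it too); everywhere else your proposal is correct and follows the paper's route.
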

\begin{proof}\hspace*{1mm}\\
Theorem \ref{Theorem:AdmissibilityOfWeakening} is proved by induction on the height \(<h>\) of the proof of the sequent \((\Gamma \Rightarrow \Delta)\).

\(<h=1>\)

If the sequent \((\Gamma \Rightarrow \Delta)\) is an axiom, then the sequent \((\Pi,\Gamma \Rightarrow \Delta, \Lambda)\) is an axiom, as well.

\(<h>1>\)
\begin{itemize}[itemsep=2pt,topsep=0pt,parsep=0pt,partopsep=0pt]
\item The rule \((\Rightarrow K_{I})\) was applied in the last step of the proof of the sequent.
\begin{displaymath}
\frac{s\overset{I}{\sim}t, \Gamma\Rightarrow \Delta, t:A}{\Gamma\Rightarrow \Delta, s:K_{I}A}\:\:(\Rightarrow K_{I})
\end{displaymath}
\begin{itemize}[label=$-$,itemsep=2pt,topsep=0pt,parsep=0pt,partopsep=0pt]
\item A new label \(t\) for the application of the rule \((\Rightarrow K_{I})\) is in \(\Pi\) or \(\Lambda\).\\
By Lemma \ref{Lemma:Substitution}, the sequent \((s\overset{I}{\sim}t,\Gamma  \Rightarrow \Delta, t:A)\) with substitution \((l/t)\) is provable. By the induction hypothesis, the sequent \((s\overset{I}{\sim}l,\Pi,\Gamma  \Rightarrow \Delta, \Lambda, l:A)\) is provable with the height \(h'\), where \(h' \leq h-1\). Here \(l\) is a new label, absent in \(\Pi,\Gamma,\Delta\) and \(\Lambda\). The sequent of the theorem is proved by applying the rule \((\Rightarrow K_{I})\):
\begin{displaymath}
\frac{s\overset{I}{\sim}l,\Pi,\Gamma  \Rightarrow \Delta, \Lambda, l:A}{\Pi, \Gamma\Rightarrow \Delta, \Lambda, s:K_{I}A}\:\:(\Rightarrow K_{I})
\end{displaymath}
\vspace{2mm}
\item The new label \(t\) for application of the rule \((\Rightarrow K_{I})\) is absent in \(\Pi\) or \(\Lambda\).\\
By the induction hypothesis, the sequent \((s\overset{I}{\sim}t,\Pi,\Gamma  \Rightarrow \Delta, \Lambda, t:A)\) is provable with the height \(h'\), where \(h' \leq h-1\). The sequent of the theorem is proved by applying the rule \((\Rightarrow K_{I})\):
\begin{displaymath}
\frac{s\overset{I}{\sim}t,\Pi,\Gamma  \Rightarrow \Delta, \Lambda, t:A}{\Pi, \Gamma\Rightarrow \Delta, \Lambda, s:K_{I}A}\:\:(\Rightarrow K_{I})
\end{displaymath}
\vspace{2mm}
\end{itemize}
\item The cases of the remaining rules are considered similarly.\\
\end{itemize}
\end{proof}

\begin{theoremA}[Invertibility of rules]\label{Theorem:InvertibilityOfRules}
All the rules of GS-LCK are invertible with the same bound of the height of the proof.
\end{theoremA}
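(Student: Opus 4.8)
The plan is to split the rules of GS-LCK into two groups. For the rules whose premises arise from the conclusion purely by weakening — namely $(K_I\Rightarrow)$, $(K_N\Rightarrow)$, $(OE)$, $(OYR)$, $(CR)$, $(Sub(p)\Rightarrow)$, $(Sub(o^r)\Rightarrow)$, $(Ref)$, $(Trans)$, $(Eucl)$ and $(Mon)$ — invertibility is immediate: in each of these rules every premise is literally the conclusion with one or more extra atoms ($t:A$, $s:o^r$, $s\overset{I}{\sim}t$, a composed-result atom, and so on) adjoined to the antecedent, so by Theorem~\ref{Theorem:AdmissibilityOfWeakening} a proof of the conclusion yields a proof of each premise with the same bound on the height.

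For the remaining rules — the eight propositional rules and the right knowledge rules $(\Rightarrow K_I)$, $(\Rightarrow K_N)$ — I would prove height-preserving invertibility by induction on the height $h$ of the given proof of the conclusion, one inversion lemma per rule. In the base case $h=1$ the conclusion is an axiom; the formula decomposed by the rule is non-atomic, hence is not the atom witnessing the axiom, so erasing it and inserting its immediate subformulas (plus $s\overset{I}{\sim}t$ in the two $K$-cases) still leaves an axiom. In the inductive step one looks at the last rule $R$ used. If the formula being inverted is not principal in $R$, one permutes the inversion above $R$: apply the induction hypothesis to the premise(s) of $R$ and then re-apply $R$, noting that all side conditions of $R$ — which have the shape ``some atom is not yet in $\Gamma$'' — are undisturbed, since inverting a propositional rule changes only propositional material at an already-present label, and inverting $(\Rightarrow K_I)$ or $(\Rightarrow K_N)$ only adds a relational atom and a formula over a fresh label. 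If the inverted formula is principal in $R$, then $R$ is exactly that rule and its premise(s) are the desired sequent(s) — in the two $K$-cases after a renaming of the eigenvariable by Lemma~\ref{Lemma:Substitution} — already derived with height $\le h-1$.

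The step I expect to be the main obstacle is the permutation case for $(\Rightarrow K_I)$ and $(\Rightarrow K_N)$: pushing such an inversion, which itself introduces a fresh label $t$, above a rule $R$ that also carries an eigenvariable freshness condition can make the two fresh labels clash. I would resolve this in the same way Lemma~\ref{Lemma:Substitution} is exploited elsewhere — first rename the eigenvariable of $R$ to a label disjoint from all relevant ones, with the height preserved, and only then invoke the induction hypothesis and re-apply $R$. The remaining bookkeeping (shared contexts and multiplicities being copied faithfully, and the freshly added $s\overset{I}{\sim}t$ and $t:A$ never coinciding with a principal formula of $R$ because $t$ is fresh) is routine.
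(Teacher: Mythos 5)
Your proposal is correct and reaches the same conclusion by a partly different, and for most rules more economical, route. The paper proves invertibility of every rule by one uniform template — induction on the height of the derivation of the conclusion with a case split on the last rule applied — and works this out in detail even for \((K_{I}\Rightarrow)\), whose premise literally contains its conclusion. You instead observe that \((K_{I}\Rightarrow)\), \((K_{N}\Rightarrow)\), \((OE)\), \((OYR)\), \((CR)\), the two substitution rules and the four relational rules are all of this ``repetition'' shape, so their height-preserving invertibility is an immediate corollary of Theorem~\ref{Theorem:AdmissibilityOfWeakening}, which the paper establishes before invertibility; this disposes of eleven of the twenty-one rules in one line and is the standard shortcut for G3-style geometric rules. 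For the propositional rules and \((\Rightarrow K_{I})\), \((\Rightarrow K_{N})\) you fall back on exactly the paper's induction on height, and your use of Lemma~\ref{Lemma:Substitution} to rename eigenvariables before permuting mirrors how the paper deploys that lemma in the weakening proof. One caveat applies equally to both arguments: when an inversion pushes a new formula into the antecedent and is then permuted below a rule carrying a ``not already in \(\Gamma\)'' proviso, the proviso can in principle fail because the added formula coincides with the one that rule would introduce; in that degenerate case the sequent obtained from the induction hypothesis is the target sequent up to a duplicated occurrence, and discharging it needs contraction (atomic via Lemma~\ref{Lemma:AdmissibilityOfContractionAtomic}, or a small direct argument in the non-atomic case, since Theorem~\ref{Theorem:AdmissibilityOfContraction} is only proved afterwards). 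The paper's ``the remaining cases are considered similarly'' glosses over this exactly as your ``routine bookkeeping'' does, so it is a shared loose end rather than a defect specific to your proof, but it deserves a sentence.
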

\begin{proof}\hspace*{1mm}\\
Theorem \ref{Theorem:InvertibilityOfRules} is proved for each rule separately.\\
The rule \((K_{I}\Rightarrow)\)
\begin{displaymath}
\frac{t:A, s:K_{I}A, s\overset{I}{\sim}t, \Gamma\Rightarrow\Delta}{s:K_{I}A, s\overset{I}{\sim}t, \Gamma\Rightarrow\Delta}\:\:( \: K_{I}\Rightarrow)
\end{displaymath}
Invertibility is proved by induction on the height \(<h>\) of the proof of the sequent of the conclusion of the rule  \((K_{I}\Rightarrow)\).

\(<h=1>\)

If the sequent \((s:K_{I}A, s\overset{I}{\sim}t, \Gamma\Rightarrow\Delta)\) is an axiom, then the sequent \((t:A, s:K_{I}A, s\overset{I}{\sim}t, \Gamma\Rightarrow\Delta)\) is an axiom, too.

\(<h>1>\)
\begin{itemize}[itemsep=2pt,topsep=0pt,parsep=0pt,partopsep=0pt]
\item The formula \(s\overset{I}{\sim}t\) is the principal formula.\\
\begin{itemize}[label=$-$,itemsep=2pt,topsep=0pt,parsep=0pt,partopsep=0pt]
\item The rule \((\:Sub(o^{r}) \Rightarrow)\) was applied in the last step of the proof of the sequent.\\
\begin{displaymath}
\frac{s:o^{r}, s:K_{I}A, s\overset{I}{\sim}t,  t:o^{r}, \Gamma' \Rightarrow  \Delta}{s:K_{I}A, s\overset{I}{\sim}t,  t:o^{r}, \Gamma' \Rightarrow\Delta}\:\:(\:Sub(o^{r}) \Rightarrow) \:\:\:\:\:\:\:\:\:\:\:\:\:\:\:\:\:\:\:\:\:\:\:\:\:\:\:\:\:\:\:\:\:\:\:\:\:\:\:\:\:\:\:\:\:\:\:\:\:\:\:\:\:\:\:\:\:\:\:\:\:\:\:\:
\end{displaymath}\\
By the induction hypothesis, the sequent \((t:A, s:o^{r}, s:K_{I}A, s\overset{I}{\sim}t,  t:o^{r}, \Gamma' \Rightarrow  \Delta)\) is provable with the height \(h'\), where \(h' \leq h-1\). The sequent of the premise of the rule \((K_{I}\Rightarrow)\) is proved by applying the rule \((\:Sub(o^{r}) \Rightarrow)\):
\begin{displaymath}
\frac{t:A, s:o^{r}, s:K_{I}A, s\overset{I}{\sim}t,  t:o^{r}, \Gamma' \Rightarrow  \Delta}{t:A, s:K_{I}A, s\overset{I}{\sim}t,  t:o^{r}, \Gamma' \Rightarrow  \Delta}\:\:(\:Sub(o^{r}) \Rightarrow) \:\:\:\:\:\:\:\:\:\:\:\:\:\:\:\:\:\:\:\:\:\:\:\:\:\:\:\:\:\:\:\:\:\:\:\:\:\:\:\:\:\:\:\:\:\:\:\:\:\:\:\:\:\:\:\:\:\:\:\:\:\:\:\:
\end{displaymath}
\item For rules \((K_{I}\Rightarrow)\), \((Trans)\), \((Eucl)\), \((Mon)\) in a similar way.\\
\end{itemize}
\item The case where the formula \(s:K_{I}A\) is the principal formula and the case where formulas \(s\overset{I}{\sim}t\) and \(s:K_{I}A\) both are not principal formulas are considered similarly.
\end{itemize}
Invertibility of the remaining rules is proved in a similar way.\\
\end{proof}

\begin{theoremA}[Admissibility of contraction]\label{Theorem:AdmissibilityOfContraction}
If a sequent \((\Pi,\Pi,\Gamma \Rightarrow \Delta, \Lambda, \Lambda)\) is provable in GS-LCK, then sequent \((\Pi,\Gamma \Rightarrow \Delta, \Lambda)\) is provable with the same bound of the height of the proof in GS-LCK, too. \(\Pi,\Gamma, \Delta,\Lambda\) are any multisets of formulas. 
\end{theoremA}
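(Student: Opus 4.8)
The plan is to reduce the statement to the single--formula case and then argue by a double induction. Iterating along the multisets $\Pi$ and $\Lambda$, it suffices to show that if $(C,C,\Gamma\Rightarrow\Delta)$ is provable with a proof of height $h$, then so is $(C,\Gamma\Rightarrow\Delta)$, and symmetrically for a duplicated formula on the right. When $C$ is an atomic formula ($s{:}p$, $s{:}o^{r}$, or $s\overset{I}{\sim}t$) this is precisely Lemma \ref{Lemma:AdmissibilityOfContractionAtomic}. Hence the real content is the case in which $C$ is a compound labelled formula, which I would treat by induction on the complexity of $C$, with a subsidiary induction on the height $h$ of the derivation of $(C,C,\Gamma\Rightarrow\Delta)$.

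For the height induction, the base case $h=1$ is immediate: the pair witnessing an axiom is always atomic, so deleting one copy of the compound side formula $C$ from an axiom still leaves an axiom. For $h>1$, I would inspect the last rule $r$. If neither occurrence of $C$ is principal in $r$ (this covers the observational rules $(OE)$, $(OYR)$, $(CR)$, the substitution rules and the relational rules, whose principal formulas are atomic, as well as all other rules when applied to other formulas), both copies of $C$ descend into every premise; the height hypothesis contracts $C$ in each premise and a reapplication of $r$ finishes, with Lemma \ref{Lemma:Substitution} used to rename a fresh eigenlabel of $r$ should it collide. If one copy of $C$ is principal in a rule that \emph{retains} its principal formula in the premise --- $(K_{I}\Rightarrow)$ or $(K_{N}\Rightarrow)$ --- the premise already carries both copies of $C$ alongside the newly generated side formulas, so again the height hypothesis contracts one copy and $r$ is reapplied; its side conditions survive because they constrain only formulas other than $C$.

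The delicate case, and the one I expect to be the main obstacle, is a copy of $C$ that is principal in a rule which \emph{consumes} it: the propositional rules $(\neg\Rightarrow)$, $(\Rightarrow\neg)$, $(\vee\Rightarrow)$, $(\Rightarrow\vee)$, $(\wedge\Rightarrow)$, $(\Rightarrow\wedge)$, $(\rightarrow\Rightarrow)$, $(\Rightarrow\rightarrow)$, and the right knowledge rules $(\Rightarrow K_{I})$, $(\Rightarrow K_{N})$. Here I would apply invertibility (Theorem \ref{Theorem:InvertibilityOfRules}) to the second, non--principal occurrence of $C$ in the conclusion, exposing its immediate subformulas, so that the derivation now ends with premises containing two copies of each proper subformula of $C$. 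In the $(\Rightarrow K_{I})$ subcase this also produces two relational atoms $s\overset{I}{\sim}t$ and $s\overset{I}{\sim}t'$; after identifying $t'$ with $t$ by Lemma \ref{Lemma:Substitution} they coincide and one copy is removed by Lemma \ref{Lemma:AdmissibilityOfContractionAtomic}. Since each proper subformula is strictly less complex than $C$, the outer induction hypothesis eliminates the remaining duplicates, and reapplying $r$ yields $(C,\Gamma\Rightarrow\Delta)$ with no growth in height, because invertibility, substitution and atomic contraction are all height--preserving. Tracking the fresh--label conditions of $(\Rightarrow K_{I})$ and $(\Rightarrow K_{N})$ through this manoeuvre is where the proof needs the most care.
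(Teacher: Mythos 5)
Your proposal is correct and follows essentially the same route as the paper: the paper also proves the theorem by induction on the pair (complexity of the contracted formulas, height of the derivation), reapplying the last rule when the contracted formula is not principal and invoking invertibility (Theorem \ref{Theorem:InvertibilityOfRules}) to expose subformulas of strictly smaller complexity when it is, with Lemmas \ref{Lemma:AdmissibilityOfContractionAtomic} and \ref{Lemma:Substitution} handling the atomic and fresh-label bookkeeping. The only cosmetic difference is that the paper contracts the multisets $\Pi,\Lambda$ simultaneously, taking $c$ to be the sum of their complexities, rather than first reducing to single-formula contraction.
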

\begin{proof}\hspace*{1mm}\\
Theorem \ref{Theorem:AdmissibilityOfContraction} is proved by induction on the ordered tuple pair \(<c,h>\), where \(c\) is the sum of complexity of all the formulas in \(\Pi\) and \(\Lambda\), and \(h\) is the height of the proof of the sequent \((\Pi,\Pi,\Gamma \Rightarrow \Delta, \Lambda, \Lambda)\).

\(<c \geq 1, h=1>\)

If the sequent \((\Pi,\Pi,\Gamma \Rightarrow \Delta, \Lambda, \Lambda)\) is an axiom, then the sequent \((\Pi,\Gamma \Rightarrow \Delta, \Lambda)\) is an axiom, too.

\(<c \geq 1, h>1>\)
\begin{itemize}[itemsep=2pt,topsep=0pt,parsep=0pt,partopsep=0pt]
\item The rule \((\neg\Rightarrow)\) was applied in the last step of the proof of the sequent.\\
\begin{itemize}[label=$-$,itemsep=2pt,topsep=0pt,parsep=0pt,partopsep=0pt]
\item The principal formula is in \(\Pi\).
\begin{displaymath}
\frac{s:\neg A, \Pi', \Pi', \Gamma \Rightarrow \Delta, \Lambda, \Lambda, s:A}{s:\neg A, s:\neg A, \Pi', \Pi', \Gamma \Rightarrow \Delta, \Lambda, \Lambda}\:\:(\neg\Rightarrow) \:\:\:\:\:\:\:\:\:\:\:\:\:\:\:\:\:\:\:\:\:\:\:\:\:\:\:\:\:\:\:\:\:\:\:\:\:\:\:\:\:\:\:\:\:\:\:\:\:\:\:\:\:\:\:\:\:\:\:\:\:\:\:\:\:\:\:\:\:\:\:\:\:\:\:\:\:\:\:\:\:\:\:\:\:\:\:\:\:\:\:\:\:\:\:\:\:\:\:\:
\end{displaymath}
By invertibility of the rule \((\neg\Rightarrow)\), the sequent \((\Pi', \Pi', \Gamma \Rightarrow \Delta, \Lambda, \Lambda, s:A, s:A)\) is provable. The value of the ordered tuple pair has reduced to \(<c-1, h>\). By the induction hypothesis, the sequent \((\Pi', \Gamma \Rightarrow \Delta, \Lambda, s:A)\) is provable with the height \(h'\), where \(h' \leq h-1\). The sequent of the theorem is proved by applying the rule \((\neg\Rightarrow)\):
\begin{displaymath}
\frac{\Pi', \Gamma \Rightarrow \Delta, \Lambda, s:A}{s:\neg A, \Pi', \Gamma \Rightarrow \Delta, \Lambda}\:\:(\neg\Rightarrow) \:\:\:\:\:\:\:\:\:\:\:\:\:\:\:\:\:\:\:\:\:\:\:\:\:\:\:\:\:\:\:\:\:\:\:\:\:\:\:\:\:\:\:\:\:\:\:\:\:\:\:\:\:\:\:\:\:\:\:\:\:\:\:\:\:\:\:\:\:\:\:\:\:\:\:\:\:\:\:\:\:\:\:\:\:\:\:\:\:\:\:\:\:\:\:\:\:\:\:\:
\end{displaymath}
\item The principal formula is absent in \(\Pi\).
\begin{displaymath}
\frac{\Pi, \Pi, \Gamma \Rightarrow \Delta, \Lambda, \Lambda, s:A}{s:\neg A, \Pi, \Pi, \Gamma \Rightarrow \Delta, \Lambda, \Lambda}\:\:(\neg\Rightarrow) \:\:\:\:\:\:\:\:\:\:\:\:\:\:\:\:\:\:\:\:\:\:\:\:\:\:\:\:\:\:\:\:\:\:\:\:\:\:\:\:\:\:\:\:\:\:\:\:\:\:\:\:\:\:\:\:\:\:\:\:\:\:\:\:\:\:\:\:\:\:\:\:\:\:\:\:\:\:\:\:\:\:\:\:\:\:\:\:\:\:\:\:\:\:\:\:\:\:\:\:
\end{displaymath}
By the induction hypothesis, the sequent \((\Pi, \Gamma \Rightarrow \Delta, \Lambda, s:A)\) is provable with the height \(h'\), where \(h' \leq h-1\). The sequent of the theorem is proved by applying the rule \((\neg\Rightarrow)\):
\begin{displaymath}
\frac{\Pi, \Gamma \Rightarrow \Delta, \Lambda, s:A}{s:\neg A, \Pi, \Gamma \Rightarrow \Delta, \Lambda}\:\:(\neg\Rightarrow) \:\:\:\:\:\:\:\:\:\:\:\:\:\:\:\:\:\:\:\:\:\:\:\:\:\:\:\:\:\:\:\:\:\:\:\:\:\:\:\:\:\:\:\:\:\:\:\:\:\:\:\:\:\:\:\:\:\:\:\:\:\:\:\:\:\:\:\:\:\:\:\:\:\:\:\:\:\:\:\:\:\:\:\:\:\:\:\:\:\:\:\:\:\:\:\:\:\:\:\:
\end{displaymath}
\end{itemize}
\item The cases of the remaining rules are considered similarly.\\
\end{itemize}
\end{proof}

\begin{theoremA}[Admissibility of cut]\label{Theorem:AdmissibilityOfCut}
If sequents \((\Gamma \Rightarrow \Delta, F)\) and \((F, \Pi \Rightarrow \Lambda)\) are provable in GS-LCK, then sequent \((\Pi,\Gamma \Rightarrow \Delta, \Lambda)\) is provable in GS-LCK too. \(F\) is any formula and \(\Pi,\Gamma, \Delta,\Lambda\) are any multisets of formulas. 
\end{theoremA}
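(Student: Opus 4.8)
The plan is to argue by a principal induction on the complexity of the cut formula $F$ and a secondary induction on the sum of the heights of the given derivations of $(\Gamma\Rightarrow\Delta,F)$ and $(F,\Pi\Rightarrow\Lambda)$, relying throughout on the structural results already established: admissibility of weakening (Theorem~\ref{Theorem:AdmissibilityOfWeakening}) and of contraction (Theorem~\ref{Theorem:AdmissibilityOfContraction}), the substitution lemma (Lemma~\ref{Lemma:Substitution}), and invertibility of all rules (Theorem~\ref{Theorem:InvertibilityOfRules}). The case analysis is the familiar one: (i) one of the premises is an axiom; (ii) $F$ is not principal in the last rule of one of the premises; (iii) $F$ is principal on both sides.

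In case (i), if $F$ is a relational atom $s\overset{I}{\sim}t$ the argument is immediate, since no axiom or rule of GS-LCK ever places a relational atom in a succedent: that occurrence is idle and can be deleted throughout the derivation of $(\Gamma\Rightarrow\Delta,s\overset{I}{\sim}t)$, yielding a derivation of $(\Gamma\Rightarrow\Delta)$, whence weakening gives the conclusion. If the left premise is an axiom, or the right premise is an axiom, the conclusion is again an axiom or is obtained from one of the premises by weakening. The one delicate subcase is when the right premise is a clash axiom $s:o^{r_1},s:o^{r_2},\Pi''\Rightarrow\Lambda$ with $F=s:o^{r_1}$; here I would use a small auxiliary observation — that an idle succedent occurrence of $s:o^{r_1}$ can be erased height-preservingly whenever $s:o^{r_2}$ is in the antecedent and $r_1\neq r_2$, because $o^{r}$ is atomic, never active in a succedent, and the antecedent clash already closes the relevant leaves — applied to the weakened left premise.

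In case (ii) the cut is permuted above the last rule of the premise in which $F$ is not principal; the resulting sub-cuts retain $F$ but have strictly smaller total height, so the secondary induction hypothesis applies, and the rule is re-applied underneath, followed by admissibility of contraction to absorb the context duplicated by the cut. The points that need care are the relational, observational and substitution rules and the knowledge rules, all of which keep their principal atoms in the premise and carry side conditions of the form ``the added atom is not already present'' or the eigenvariable-freshness condition of $(\Rightarrow K_I)$ and $(OE)$: if a repetition condition fails once the cut context is brought in, the rule instance is redundant (its premise is a weakening of its conclusion) and is simply dropped, and if an eigenvariable clashes with the new context, Lemma~\ref{Lemma:Substitution} is applied first to rename it. This case is where I expect the main work to lie — verifying uniformly that no permutation invalidates a side condition and disposing cleanly of the redundant-instance and renaming subcases.

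In case (iii) the propositional subcases ($F$ one of $s:\neg A$, $s:A\vee B$, $s:A\wedge B$, $s:A\to B$) are the standard reductions, replacing the cut on $F$ by one or two cuts on its immediate subformulas via the primary induction hypothesis, together with weakening and contraction for the contexts. The genuinely new subcase is $F=s:K_IA$ with $I\neq N$, the left derivation ending with
\begin{displaymath}
\frac{s\overset{I}{\sim}u,\Gamma\Rightarrow\Delta,u:A}{\Gamma\Rightarrow\Delta,s:K_IA}\:(\Rightarrow K_I)\qquad(u\text{ fresh})
\end{displaymath}
and the right derivation with
\begin{displaymath}
\frac{t:A,s:K_IA,s\overset{I}{\sim}t,\Pi'\Rightarrow\Lambda}{s:K_IA,s\overset{I}{\sim}t,\Pi'\Rightarrow\Lambda}\:(K_I\Rightarrow),
\end{displaymath}
so $\Pi=s\overset{I}{\sim}t,\Pi'$. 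First cut $s:K_IA$ between the left derivation and the premise of $(K_I\Rightarrow)$: the cut formula is unchanged but the right height drops, so the secondary hypothesis yields $t:A,s\overset{I}{\sim}t,\Pi',\Gamma\Rightarrow\Delta,\Lambda$. Apply Lemma~\ref{Lemma:Substitution} with $(t/u)$ to the left premise to obtain $s\overset{I}{\sim}t,\Gamma\Rightarrow\Delta,t:A$, then cut on $t:A$ — a strictly smaller formula, covered by the primary hypothesis — obtaining $s\overset{I}{\sim}t,s\overset{I}{\sim}t,\Gamma,\Gamma,\Pi'\Rightarrow\Delta,\Delta,\Lambda$; admissibility of contraction then gives $s\overset{I}{\sim}t,\Pi',\Gamma\Rightarrow\Delta,\Lambda$, which is $(\Pi,\Gamma\Rightarrow\Delta,\Lambda)$. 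The subcase $F=s:K_NA$ runs identically with $u=t=s$ and $s\overset{N}{\sim}s$ in place of $s\overset{I}{\sim}t$.
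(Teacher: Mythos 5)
Your proposal is correct and follows essentially the same route as the paper's proof: a lexicographic induction on the pair (complexity of $F$, sum of derivation heights), with the standard three-way case split and reliance on the previously established substitution, weakening, contraction and invertibility results. In fact you work out the crucial principal--principal case for $s:K_IA$ (and the relational-atom and clash-axiom subtleties) in more detail than the paper, which only exhibits a couple of representative cases and declares the rest similar.
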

\begin{proof}\hspace*{1mm}\\
Theorem \ref{Theorem:AdmissibilityOfCut} is proved by induction on the ordered tuple pair \(<c,h>\), where \(c\) is the complexity of formula \(F\), and \(h\) is the sum of heights of the proof of the sequents \((\Gamma \Rightarrow \Delta, F)\) and \((F, \Pi \Rightarrow \Lambda)\).

\(<c \geq 1, h=2>\)

The sequents \((\Gamma \Rightarrow \Delta, F)\) and \((F, \Pi \Rightarrow \Lambda)\) are the axioms. If formula \(F\) is not principal in one at least of the sequents, then \((\Pi,\Gamma \Rightarrow \Delta, \Lambda)\) is an axiom. If formula \(F\) is principal in both sequents, then \(F\) should be in \(\Gamma\) and \(\Delta\) or only in \(\Gamma\) (the case where the axiom is of type \(s:o^{r_{1}},s:o^{r_{2}},\Gamma\Rightarrow \Delta\)). Therefore the sequent \((\Pi,\Gamma \Rightarrow \Delta, \Lambda)\) is also an axiom.

\(<c \geq 1, h>2>\)
\begin{itemize}[itemsep=2pt,topsep=0pt,parsep=0pt,partopsep=0pt]
\item Formula \(F\) is not principal in the sequent \((\Gamma \Rightarrow \Delta, F)\).\\
\begin{itemize}[label=$-$,itemsep=2pt,topsep=0pt,parsep=0pt,partopsep=0pt]
\item The rule \((Sub(o^{r}) \Rightarrow)\) was applied in the last step of the proof of the sequent \((\Gamma \Rightarrow \Delta, F)\).
\begin{displaymath}
\frac{s:o^{r}, t:o^{r}, s\overset{N}{\sim}t, \Gamma\Rightarrow\Delta, F}{ t:o^{r}, s\overset{N}{\sim}t, \Gamma\Rightarrow\Delta, F}\:\:( \: Sub(o^{r}) \Rightarrow)\:\:\:\:\:\:\:\:\:\:\:\:\:\:\:\:\:\:\:\:\:\:\:\:\:\:\:\:\:\:\:\:\:\:\:\:\:\:\:\:\:\:\:\:\:\:\:\:\:\:\:\:\:\:\:\:\:\:\:\:\:\:\:\:\:\:\:\:\:\:\:\:\:\:\:\:\:\:\:\:\:\:\:\:\:\:\:\:\:\:\:\:\:\:\:\:\:\:\:\:\:\:\:\:\:\:\:\:\:\:\:\:\:\:
\end{displaymath}
By the induction hypothesis, the sequent \((s:o^{r}, t:o^{r}, s\overset{N}{\sim}t,\Pi,\Gamma \Rightarrow \Delta, \Lambda)\) is provable. The sequent of the theorem is proved by applying the rule \((Sub(o^{r}) \Rightarrow)\):
\begin{displaymath}
\frac{s:o^{r}, t:o^{r}, s\overset{N}{\sim}t,\Pi,\Gamma \Rightarrow \Delta, \Lambda}{ t:o^{r}, s\overset{N}{\sim}t,\Pi,\Gamma \Rightarrow \Delta, \Lambda}\:\:( \: Sub(o^{r}) \Rightarrow)\:\:\:\:\:\:\:\:\:\:\:\:\:\:\:\:\:\:\:\:\:\:\:\:\:\:\:\:\:\:\:\:\:\:\:\:\:\:\:\:\:\:\:\:\:\:\:\:\:\:\:\:\:\:\:\:\:\:\:\:\:\:\:\:\:\:\:\:\:\:\:\:\:\:\:\:\:\:\:\:\:\:\:\:\:\:\:\:\:\:\:\:\:\:\:\:\:\:\:\:\:\:\:\:\:\:\:\:\:\:\:\:\:\:
\end{displaymath}
\item For applications of other rules in a similar way.\\
\end{itemize}
\item Formula \(F\) is not principal in the sequent \((F, \Pi \Rightarrow \Lambda)\).\\
The case is considered in a similar way.\\
\item Formula \(F\) is principal in both sequents \((\Gamma \Rightarrow \Delta, F)\) and \((F, \Pi \Rightarrow \Lambda)\).\\
\begin{itemize}[label=$-$,itemsep=2pt,topsep=0pt,parsep=0pt,partopsep=0pt]
\item The sequent \((\Gamma \Rightarrow \Delta, F)\) is an axiom and the rule \((OE)\) was applied in the last step of the proof of the sequent \((F, \Pi \Rightarrow \Lambda)\).
\begin{displaymath}
s:o_{1}^{r_{o_{1}}},\Gamma\Rightarrow \Delta, s:o_{1}^{r_{o_{1}}} \:\:\:\:\:\:\:\:\:\: \:\:\:\:\:\:\:\:\:\: \:\:\:\:\:\:\:\:\:\:\:\:\:\:\:\:\:\:\:\: \:\:\:\:\:\:\:\:\:\: \:\:\:\:\:\:\:\:\:\:\:\:\:\:\:\:\:\:\:\: \:\:\:\:\:\:\:\:\:\: \:\:\:\:\:\:\:\:\:\:\:\:\:\:\:\:\:\:\:\: \:\:\:\:\:\:\:\:\:\: \:\:\:\:\:\:\:\:\:\:\:\:\:\:\:\:\:\:\:\: \:\:\:\:\:\:\:\:\:\: \:\:\:\:\:\:\:\:\:\:
\end{displaymath}
\begin{displaymath}
\frac{s\overset{I}{\sim}t,s:o_{1}^{r_{o_{1}}}, \{s:o^{r_{o}} \}_{o\in \{O_{I}\backslash o_{1} \}}, \{t:o^{r_{o}} \}_{o\in O_{I}}, \Pi\Rightarrow  \Lambda}{s:o_{1}^{r_{o_{1}}}, \{s:o^{r_{o}} \}_{o\in \{O_{I}\backslash o_{1} \}}, \{t:o^{r_{o}} \}_{o\in O_{I}}, \Pi\Rightarrow  \Lambda}\:(OE) \:\:\:\:\:\:\:\: \:\:\:\:\:\:\:\:\:\: \:\:\:\:\:\:\:\:\:\:\:\:\:\:\:\:\:\:\:\: \:\:\:\:\:\:\:\:\:\: \:\:\:\:\:\:\:\:\:\:
\end{displaymath}
By the induction hypothesis, the sequent \((s:o_{1}^{r_{o_{1}}},s\overset{I}{\sim}t,\{s:o^{r_{o}} \}_{o\in \{O_{I}\backslash o_{1} \}}, \{t:o^{r_{o}} \}_{o\in O_{I}}, \Pi, \Gamma\Rightarrow \Delta,  \Lambda)\) is provable. The sequent of the theorem is proved by applying the rule \((OE)\):
\begin{displaymath}
\frac{s:o_{1}^{r_{o_{1}}},s\overset{I}{\sim}t,\{s:o^{r_{o}} \}_{o\in \{O_{I}\backslash o_{1} \}}, \{t:o^{r_{o}} \}_{o\in O_{I}}, \Pi, \Gamma\Rightarrow \Delta,  \Lambda}{s:o_{1}^{r_{o_{1}}},\{s:o^{r_{o}} \}_{o\in \{O_{I}\backslash o_{1} \}}, \{t:o^{r_{o}} \}_{o\in O_{I}}, \Pi, \Gamma\Rightarrow \Delta,  \Lambda}\:(OE) \:\:\:\:\:\:\:\: \:\:\:\:\:\:\:\:\:\: \:\:\:\:\:\:\:\:\:\:\:\:\:\:\:\:\:\:\:\: \:\:\:\:\:\:\:\:\:\: \:\:\:\:\:\:\:\:\:\:
\end{displaymath}
\item The cases of the remaining rules are considered similarly.
\end{itemize}
\end{itemize}
\end{proof}

\section{Proof of completeness of GS-LCK}
\label{sec:ProofOfCompletenessOfGSLCK}

\begin{theoremA}[Completeness of GS-LCK]\label{theoremA:1}
If formula \(A\) is valid with respect to correlation models over \((R,\Sigma,\vec O)\), then sequent \((\Rightarrow s:A)\) is provable in GS-LCK.
\end{theoremA}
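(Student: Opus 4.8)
The plan is to argue by contraposition, in the style of Negri's semantic internalization: assuming the sequent $(\Rightarrow s:A)$ is \emph{not} provable in GS-LCK, I will read a finite countermodel off a failed, saturated proof-search branch and use it to falsify $A$. Concretely, first run root-first proof search on $(\Rightarrow s:A)$, applying the rules of GS-LCK in a fair (round-robin) fashion so that every applicable instance is eventually tried. By the Termination clause of Theorem~\ref{Theorem:Properties} the resulting proof-search tree is finite; since $(\Rightarrow s:A)$ is unprovable, at least one leaf $\Gamma^{*}\Rightarrow\Delta^{*}$ is not an axiom. The side conditions on all rules (each rule refuses to fire when what it would add to $\Gamma$, resp.\ to the conclusion, is already present) guarantee that this leaf is \emph{saturated}: no further rule application changes it. In particular it exhibits no axiomatic clash, i.e.\ no $s{:}p$ on both sides, no $s{:}o^{r}$ on both sides, and no pair $s{:}o^{r_{1}},s{:}o^{r_{2}}$ with $r_{1}\neq r_{2}$ in the antecedent.

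Second, I would define a model $M=(S,\{\overset{I}{\sim}\}_{I\subseteq N},V)$ from $\Gamma^{*}$: let $S$ be the (finite) set of labels occurring on the branch; set $s\overset{I}{\sim}t$ iff the relational atom $s\overset{I}{\sim}t$ lies in $\Gamma^{*}$; set $V(s)(p)=\mathrm{true}$ iff $s{:}p\in\Gamma^{*}$; and declare the local state $s_{I}(o)=r$ iff $s{:}o^{r}\in\Gamma^{*}$. Here saturation under the relational rules $(Ref)$, $(Trans)$, $(Eucl)$, $(Mon)$ makes each $\overset{I}{\sim}$ an equivalence relation monotone in $I$, with the $\emptyset$-instances giving the vacuous-information clause; saturation under $(OYR)$ and absence of an $o^{r_{1}},o^{r_{2}}$-clash make $s_{I}$ a total function into $R$; saturation under $(CR)$ makes it compatible with $\Sigma$-composition, so a global state function $s\colon O_{a_{1}}\times\cdots\times O_{a_{n}}\to R$ can be reconstructed and coheres with the local $s_{I}$; saturation under $(OE)$ ties $s\overset{I}{\sim}t$ to $s_{I}=t_{I}$; and saturation under $(Sub(p))$, $(Sub(o^{r}))$, together with the $K_{N}$-rules, forces $N$-equivalent labels to be atomically indistinguishable, so identifying them yields the Observability principle. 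Thus $M$ is a correlation model over $(R,\Sigma,\vec{O})$.

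Third, the heart of the argument is the Truth Lemma: for every labelled formula $s{:}B$ on the branch, $s{:}B\in\Gamma^{*}$ implies $M,s\models B$, and $s{:}B\in\Delta^{*}$ implies $M,s\not\models B$. This goes by induction on the complexity of $B$. The atomic and propositional cases follow from the definition of $V$, from the absence of axiomatic clashes, and from saturation under the propositional rules; the case $B=K_{I}A$ uses saturation under $(K_{I}\Rightarrow)$ (resp.\ $(K_{N}\Rightarrow)$) for an antecedent occurrence and the fact that $(\Rightarrow K_{I})$ (resp.\ $(\Rightarrow K_{N})$) had installed a witnessing $\overset{I}{\sim}$-successor carrying $A$ in $\Delta^{*}$ for a succedent occurrence; and the case $B=o^{r}$ is immediate from the definition of $s_{I}$. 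Finally, since each GS-LCK rule read from premises to conclusion preserves falsifiability in a fixed model, falsifiability of the saturated leaf propagates down the chosen branch to the root, giving $M,s\not\models A$ --- contradicting the validity of $A$.

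\textbf{Expected main obstacle.} I expect the delicate part to be the model-construction step rather than the Truth Lemma, namely verifying that the relations and the observation data read off a saturated branch really do satisfy Definition~3 together with the structure $(R,\Sigma,\vec{O})$: that $s\overset{I}{\sim}t$ is genuinely equivalent to $s_{I}=t_{I}$, that the recovered global state $s$ $\Sigma$-contracts to every $s_{I}$ (the $\Sigma$-coherence condition $\Sigma\{\Sigma A_{k}\}=\Sigma(\bigcup A_{k})$ on disjoint families), and that the Observability principle can be imposed by quotienting by $N$-equivalence without disturbing the valuation or observations. These are exactly what the non-standard rules $(OE)$, $(OYR)$, $(CR)$, $(Sub(p))$, $(Sub(o^{r}))$ were designed to force, but assembling them into a bona fide correlation model requires careful bookkeeping. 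An alternative route, avoiding countermodels, would exploit completeness of HS-LCK together with admissibility of cut (Theorem~\ref{Theorem:AdmissibilityOfCut}): derive each Hilbert axiom H1--H14 in GS-LCK, simulate Modus Ponens by cut, and simulate $K_{I}$-necessitation by a root application of $(\Rightarrow K_{I})$ (resp.\ $(\Rightarrow K_{N})$) using the Substitution lemma (Lemma~\ref{Lemma:Substitution}) and weakening (Theorem~\ref{Theorem:AdmissibilityOfWeakening}); there the hard cases would be the GS-LCK derivations of the correlation axioms H13 and H14.
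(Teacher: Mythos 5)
Your primary route (contraposition plus extraction of a finite countermodel from a saturated failed branch) is not the one the paper takes; the paper's proof is precisely the ``alternative route'' you sketch in your closing sentences. The authors reduce completeness of GS-LCK to the known completeness of HS-LCK: by induction on the length of the Hilbert derivation they derive each axiom H1--H14 in GS-LCK (explicit derivations are given for H4, H8, H12 and H13; the correlation axioms are handled with $(OE)$, $(Sub(o^{r})\Rightarrow)$, $(Mon)$ and the knowledge rules), they simulate Modus Ponens by invertibility of $(\Rightarrow\rightarrow)$ followed by cut admissibility (Theorem~\ref{Theorem:AdmissibilityOfCut}), and they simulate $K_{I}$-necessitation by the Substitution lemma (Lemma~\ref{Lemma:Substitution}), weakening (Theorem~\ref{Theorem:AdmissibilityOfWeakening}) and a final application of $(\Rightarrow K_{I})$. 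So your fallback is essentially the paper's proof, and your guess that H13 and H14 are the delicate derivations is on target. What the semantic route would buy, if completed, is independence from the Baltag--Smets completeness theorem for HS-LCK, which the paper's argument presupposes.

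As written, however, your main route has a genuine gap at its very first step. You invoke ``the Termination clause of Theorem~\ref{Theorem:Properties}'' to obtain a finite, saturated proof-search tree for bare GS-LCK, but termination in this paper is a property of the procedure GS-LCK-PROC, which achieves it only through extra loop-checks ($TableLK$, $TableRK$, and the bound $n(K_{I})+1$ on chains of $(\Rightarrow K_{I})$ applications). Unrestricted root-first search in GS-LCK does not saturate: $(\Rightarrow K_{I})$ always introduces a fresh label $t$, after which $(K_{I}\Rightarrow)$ becomes applicable at $t$, then $(\Rightarrow K_{I})$ again, and so on. So either you run unrestricted search and lose finiteness, or you run GS-LCK-PROC and the non-axiom leaf is \emph{not} saturated: an occurrence of $s:K_{I}A$ in the succedent may have been blocked by the Max bound without ever receiving a witnessing $\overset{I}{\sim}$-successor, and the $K_{I}$ case of your Truth Lemma then fails unless you add a blocking or filtration argument identifying the blocked label with an earlier one on its chain. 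That is exactly the content of Lemmas~\ref{Lemma:PermutationOfTheRuleLKI} and~\ref{Lemma:MinimalApplicationsOfTheRuleRKI}, and it is not routine. The model-construction difficulties you flag ($\Sigma$-coherence of the recovered global state, quotienting by $\overset{N}{\sim}$ for the Observability principle) are real as well, but the saturation issue is the one that breaks the proof as stated.
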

\begin{proof}\hspace*{1mm}\\
The Hilbert style proof system HS-LCK for logic of correlated knowledge is complete. Showing the provability of all valid formulas of HS-LCK in GS-LCK, the completeness of GS-LCK is proved. Theorem \ref{theoremA:1} is proved by induction on the number of steps \(<NSteps>\), used to prove formula \(A\) in HS-LCK.

\(<NSteps=1>\)

Formula \(A\) is an axiom of calculus HS-LCK.

\begin{itemize}[itemsep=2pt,topsep=0pt,parsep=0pt,partopsep=0pt]
\item The axiom "H4. \(K_{I}(A\rightarrow B)\rightarrow (K_{I} A \rightarrow K_{I} B) \)", was used.
\begin{prooftree}
\AxiomC{\(t:A, ... \Rightarrow  t:B, t:A\)}
\AxiomC{\(t:B, t:A, ... \Rightarrow  t:B\)}
\RightLabel{\((\rightarrow \Rightarrow)\)}
\BinaryInfC{\( t:A\rightarrow B, t:A, s\overset{I}{\sim}t, s:K_{I}(A\rightarrow B), s:K_{I} A \Rightarrow  t:B\)}
\RightLabel{\((K_{I} \Rightarrow)\)}
\UnaryInfC{\( t:A, s\overset{I}{\sim}t, s:K_{I}(A\rightarrow B), s:K_{I} A \Rightarrow  t:B\)}
\RightLabel{\((K_{I} \Rightarrow)\)}
\UnaryInfC{\( s\overset{I}{\sim}t, s:K_{I}(A\rightarrow B), s:K_{I} A \Rightarrow  t:B\)}
\RightLabel{\((\Rightarrow K_{I})\)}
\UnaryInfC{\( s:K_{I}(A\rightarrow B), s:K_{I} A \Rightarrow  s:K_{I} B\)}
\RightLabel{\((\Rightarrow \rightarrow)\)}
\UnaryInfC{\( s:K_{I}(A\rightarrow B)\Rightarrow s:K_{I} A \rightarrow K_{I} B\)}
\RightLabel{\((\Rightarrow \rightarrow)\)}
\UnaryInfC{\(\Rightarrow s:K_{I}(A\rightarrow B)\rightarrow (K_{I} A \rightarrow K_{I} B)\)}
\end{prooftree}
\hspace*{1mm}
\item The axiom "H8. \(K_{I}A\rightarrow K_{J}A,  \:\:\text{when}\:\: I\subseteq J\)", was used.
\begin{prooftree}
\AxiomC{\(t:A, s\overset{I}{\sim}t, s\overset{J}{\sim}t, s:K_{I}A \Rightarrow  t:A\)}
\RightLabel{\((K_{I} \Rightarrow)\)}
\UnaryInfC{\(s\overset{I}{\sim}t, s\overset{J}{\sim}t, s:K_{I}A \Rightarrow  t:A\)}
\RightLabel{\((Mon)\)}
\UnaryInfC{\(s\overset{J}{\sim}t, s:K_{I}A \Rightarrow  t:A\)}
\RightLabel{\((\Rightarrow K_{J})\)}
\UnaryInfC{\(s:K_{I}A \Rightarrow  s:K_{J}A\)}
\RightLabel{\((\Rightarrow \rightarrow)\)}
\UnaryInfC{\(\Rightarrow s:K_{I}A\rightarrow K_{J}A\)}
\end{prooftree}
\hspace*{1mm}
\item The axiom "H12. \(o^{r}_{I} \rightarrow K_{I} o^{r}_{I}\)", was used.
\begin{prooftree}
\AxiomC{\(t:o^{r}_{I}, s\overset{I}{\sim}t, s:o^{r}_{I} \Rightarrow  t:o^{r}_{I}\)}
\RightLabel{\(( \: Sub(o^{r}) \Rightarrow)\)}
\UnaryInfC{\(s\overset{I}{\sim}t, s:o^{r}_{I} \Rightarrow  t:o^{r}_{I}\)}
\RightLabel{\((\Rightarrow K_{I})\)}
\UnaryInfC{\(s:o^{r}_{I} \Rightarrow  s:K_{I} o^{r}_{I}\)}
\RightLabel{\((\Rightarrow \rightarrow)\)}
\UnaryInfC{\(\Rightarrow s:o^{r}_{I} \rightarrow K_{I} o^{r}_{I}\)}
\end{prooftree}
\hspace*{1mm}
\item The axiom "H13. \((\underset{o\in O_{I}}{\wedge} o^{r_{o}} \wedge K_{I}A)\rightarrow  K_{\emptyset}(\underset{o\in O_{I}}{\wedge} o^{r_{o}} \rightarrow A),\:\:\text{when}\:\:I \subset N\)", was used.
\begin{prooftree}
\AxiomC{\(t:A, s\overset{I}{\sim}t, t:\underset{o\in O_{I}}{\wedge} o^{r_{o}}, s\overset{\emptyset}{\sim}t,s:\underset{o\in O_{I}}{\wedge} o^{r_{o}}, s:K_{I}A \Rightarrow t:A\)}
\RightLabel{\((K_{I} \Rightarrow)\)}
\UnaryInfC{\(s\overset{I}{\sim}t, t:\underset{o\in O_{I}}{\wedge} o^{r_{o}}, s\overset{\emptyset}{\sim}t,s:\underset{o\in O_{I}}{\wedge} o^{r_{o}}, s:K_{I}A \Rightarrow t:A\)}
\RightLabel{\((OE)\)}
\UnaryInfC{\(t:\underset{o\in O_{I}}{\wedge} o^{r_{o}}, s\overset{\emptyset}{\sim}t,s:\underset{o\in O_{I}}{\wedge} o^{r_{o}}, s:K_{I}A \Rightarrow t:A\)}
\RightLabel{\((\Rightarrow \rightarrow)\)}
\UnaryInfC{\(s\overset{\emptyset}{\sim}t,s:\underset{o\in O_{I}}{\wedge} o^{r_{o}}, s:K_{I}A \Rightarrow t:\underset{o\in O_{I}}{\wedge} o^{r_{o}} \rightarrow A\)}
\RightLabel{\((\wedge \Rightarrow)\)}
\UnaryInfC{\(s\overset{\emptyset}{\sim}t,s:\underset{o\in O_{I}}{\wedge} o^{r_{o}} \wedge K_{I}A \Rightarrow t:\underset{o\in O_{I}}{\wedge} o^{r_{o}} \rightarrow A\)}
\RightLabel{\((\Rightarrow K_{\emptyset})\)}
\UnaryInfC{\(s:\underset{o\in O_{I}}{\wedge} o^{r_{o}} \wedge K_{I}A \Rightarrow s:K_{\emptyset}(\underset{o\in O_{I}}{\wedge} o^{r_{o}} \rightarrow A)\)}
\RightLabel{\((\Rightarrow \rightarrow)\)}
\UnaryInfC{\(\Rightarrow s:(\underset{o\in O_{I}}{\wedge} o^{r_{o}} \wedge K_{I}A)\rightarrow  K_{\emptyset}(\underset{o\in O_{I}}{\wedge} o^{r_{o}} \rightarrow A)\)}
\end{prooftree}
\item The remaining axioms are considered in a similar way.\\
\end{itemize}
\(<NSteps>1>\)

One of the rules \((Modus\:ponens)\) or \((K_{I} - necessitation)\) of calculus HS-LCK was applied in the last step of the proof of the formula.

\begin{itemize}[itemsep=2pt,topsep=0pt,parsep=0pt,partopsep=0pt]
\item The rule \((Modus\:ponens)\) was applied.
\begin{displaymath}
\frac{A,A\rightarrow B}{B} \:\:(Modus\:ponens)
\end{displaymath}
By the induction hypothesis, sequents \((\Rightarrow s:A)\) and \((\Rightarrow s:A\rightarrow B)\) are provable in GS-LCK. By invertibility of the rule \((\Rightarrow \rightarrow)\), the sequent \((s:A\Rightarrow  s:B)\) is provable. The sequent \((\Rightarrow  s:B)\) of the theorem is proved by applying Theorem \ref{Theorem:AdmissibilityOfCut} "Admissibility of cut".\\

\item The rule \((K_{I} - necessitation)\) was applied.
\begin{displaymath}
\frac{A}{K_{I}A} \:\:(K_{I} - necessitation)
\end{displaymath}

By the induction hypothesis, the sequent \((\Rightarrow s:A)\) is provable in GS-LCK. By Lemma \ref{Lemma:Substitution} "Substition", the sequent \((\Rightarrow t:A)\) is provable. By Theorem \ref{Theorem:AdmissibilityOfWeakening} "Admissibility of weakening", the sequent \((s\overset{I}{\sim}t\Rightarrow t:A)\) is provable. The sequent of the theorem is proved by applying the rule \((\Rightarrow K_{I})\):
\begin{displaymath}
\frac{s\overset{I}{\sim}t\Rightarrow t:A}{\Rightarrow s:K_{I}A}\:\:(\Rightarrow K_{I}) \:\:\:\:\:\:\:\:\:\:\:\:\:\:\:\:\:\: 
\end{displaymath}
\end{itemize}
\end{proof}

\section{Automated proof search system GS-LCK-PROC}
\label{sec:DecidabilityOfLogicOfCorrelatedKnowledge}
Having sound and complete sequent calculus GS-LCK for logic of correlated knowledge we can model automated proof search system for LCK. GS-LCK-PROC is defined as procedure, which uses rules and axioms of sequent calculus GS-LCK. Principal formulas of the applications of the rules \((K_{I} \Rightarrow)\), \((K_{N} \Rightarrow)\) and \((\Rightarrow K_{I})\), and the chains of new appeared relational atoms of applications of the rule \((\Rightarrow K_{I})\) are saved in tables  \(TableLK\) and \(TableRK\).

\begin{definitionA}[Table TableLK]\label{definitionA:TableLK}
Table TableLK of the principal pairs of the applications of the rules \((K_{I} \Rightarrow)\) and \((K_{N} \Rightarrow)\):
\begin{center}
    \begin{tabular}{| c | c |}
    \hline
    \multicolumn{2}{|c|}{TableLK} \\ \hline
    Main formula & Relational atom \\ \hline
    \:\:\:\:\:\:\:\:\:\:\:\:\: \:\:\:\:\:\:\:\:\:\:\:\:\: \:\:\:\:\:\:\:\:\:\:\:\:\: & \:\:\:\:\:\:\:\:\:\:\:\:\: \:\:\:\:\:\:\:\:\:\:\:\:\: \:\:\:\:\:\:\:\:\:\:\:\:\: \\ \hline
     & \\ \hline
     & \\ \hline
    \end{tabular}
\end{center}
\end{definitionA}
\begin{exampleA}\label{definitionA:TableLK}
Example of TableLK:
\begin{center}
    \begin{tabular}{| l | l |}
    \hline
    \multicolumn{2}{|c|}{TableLK} \\ \hline
    \:\:\:\:\:\:\:\:Main formula\:\:\:\:\:\:\:\: & \:\:\:\:Relational atom \:\:\:\:\\ \hline
    \:\:\(s:K_{I}A\)   &   \:\:\(s\overset{I}{\sim}t\) \\ \hline
    \:\:\(l:K_{I}B\) & \:\:\(l\overset{I}{\sim}z\) \\ \hline
    \end{tabular}
\end{center}
\end{exampleA}
\begin{definitionA}[Table TableRK]\label{definitionA:TableRK}
Table TableRK of the principal formulas and chains of new appeared relational atoms of the applications of the rule \((\Rightarrow K_{I})\):
\begin{center}
    \begin{tabular}{| l | l | l | l |}
    \hline
    \multicolumn{4}{|c|}{TableRK} \\ \hline
   \:\: Main formula \:\:\:\:& \:\:Chain of the relational atoms \:\:& \:\:Length of chain \:\: & \: Max \: \\ \hline
     &  &  &  \\  \cline{2-4}
     &  &  &  \\ \cline{2-4}
     & &   &  \\ \hline
     &  &  &  \\  \cline{2-4}
     &  &  &  \\ \cline{2-4}
     & &   &  \\ \hline
    \end{tabular}
\end{center}
where \(Max\) is the maximum length of the chain, defined by \(n(K_{I})+1\). Formula \(n(K_{I})\) denotes the number of negative occurences of knowledge operator \(K_{I}\) in a sequent. 
\end{definitionA}
\begin{exampleA}\label{definitionA:TableRK}
Example of TableRK:
\begin{center}
    \begin{tabular}{| l | l | c | c |}
    \hline
    \multicolumn{4}{|c|}{TableRK} \\ \hline
   \:\: Main formula \:\:& \:\:Chain of the relational atoms \:\:& \:\:Length of chain \:\: & \: Max \: \\ \hline
\:\:\(s,s_{1},s_{2},w_{1}:K_{I}A\) 
     & \:\:\(s\overset{I}{\sim}s_{1}, s_{1}\overset{I}{\sim}s_{2}, s_{2}\overset{I}{\sim}s_{3}\) & \:\:3  & 5\\ \cline{2-4}
     & \:\:\(s\overset{I}{\sim}t_{1}\) & \:\:1 & 5 \\  \cline{2-4}
     & \:\:\(s\overset{I}{\sim}w_{1}, w_{1}\overset{I}{\sim}w_{2}\) & \:\:2 & 5 \\ \hline
\:\:\(z,z_{1}:K_{J}B\) 
     & \:\:\(z\overset{J}{\sim}z_{1}, z_{1}\overset{J}{\sim}z_{2}\) & \:\:2 & 7 \\ \hline
    \end{tabular}
\end{center}
\end{exampleA}

\begin{definitionA}[Procedure of the proof search]\label{definitionA:1}
Procedure GS-LCK-PROC of the proof search in the sequent calculus GS-LCK:\\\\
Initialisation:
\begin{itemize}[itemsep=2pt,topsep=0pt,parsep=0pt,partopsep=0pt]
\item Define set \(N\) of agents, tuple of sets \(\vec O = (O_{a_{1}},...,O_{a_{n}})\) of possible observations and result structure \((R,\Sigma)\).
\item Initialise the tables \(TableLK\) and \(TableRK\) by setting Max values to \((n(K_{I})+1)\), the length of the chain to 0 and the other cells leaving empty.
\item Set Output = False.
\end{itemize}
\hspace*{1mm}\\
PROCEDURE GS-LCK-PROC(Sequent, TableLK, TableRK, Output)\\\\
BEGIN\\
\begin{enumerate}[itemsep=6pt,topsep=0pt,parsep=0pt,partopsep=0pt]
\item Check if the sequent is the axiom. If the sequent is the axiom, set \(Output = True\) and go to step Finish.
\item If possible, apply any of the rules \((\neg\Rightarrow), (\Rightarrow\neg), (\Rightarrow\vee), (\wedge\Rightarrow), (\Rightarrow\rightarrow)\) and go to step 1.
\item If possible, apply any of the rules \((\vee\Rightarrow), (\Rightarrow\wedge)\) or \((\rightarrow\Rightarrow)\) and call procedure GS-LCK-PROC() for the premises of the application: \\\\
Output1 = False;\\
Output2 = False;\\\\
GS-LCK-PROC(Premise1, TableLK, TableRK, Output1);\\
GS-LCK-PROC(Premise2, TableLK, TableRK, Output2);\\\\

IF (Output1 == True) AND (Output2 == True) \\
THEN Set Output = True and go to Finish;\\
ELSE Set Output = False and go to Finish;\\

\item If possible to apply any of the rules \((K_{I} \Rightarrow)\) or \((K_{N} \Rightarrow)\), check if the principal pair is absent in the table TableLK. If it is absent, apply rule \((K_{I} \Rightarrow)\) or \((K_{N} \Rightarrow)\), add principal pair to TableLK and go to step 1.  
\item If possible to apply rule \((\Rightarrow K_{I})\), check if the principal formula is absent in the table TableRK and the length of the chain is lower than Max. If the principal formula is absent and the length of the chain is lower than Max, apply rule \((\Rightarrow K_{I})\), add principal formula and new relational atom to TableRK, increment the length of the chain by 1, and go to step 1.  
\item If possible, apply rule \((OYR)\) and call procedure GS-LCK-PROC() for the premises of the application: \\

For each k set Output(k) = False and call GS-LCK-PROC(Premise(k), TableLK, TableRK, Output(k)), where k is the index of the premise; \\

IF (for each k Output(k) == True) \\
THEN Set Output = True and go to Finish;\\
ELSE Set Output = False and go to Finish;\\
\item If possible, apply any of the rules \((\Rightarrow K_{N}), (OE), (CR), (Sub(p) \Rightarrow),(Sub(o^{r})\Rightarrow), (Ref), (Trans), (Eucl)\) or \((Mon)\) and go to step 1.
\item Finish.
\end{enumerate} 
END
\end{definitionA} 
Procedure GS-LCK-PROC gets the sequent, \(TableLK\), \(TableRK\), starting \(Output\) and returns "True", if the sequent is provable. Otherwise - "False", if it is not provable. Procedure is constructed in such a way, that it produces proofs, where number of applications of the knowledge rules of sequent calculus GS-LCK is finite. Also number of applications of other rules are bounded by requirements to rules and finite initial sets of agents, observations and results, which allows procedure to perform terminating proof search.  
\begin{lemmaA}[Permutation of the rule \((K_{I}\Rightarrow)\)]\label{Lemma:PermutationOfTheRuleLKI}
Rule \((K_{I}\Rightarrow)\) permutes down with respect to all rules of GS-LCK, except rules \((\Rightarrow K_{I})\) and \((OE)\). Rule \((K_{I}\Rightarrow)\) permutes down with rules \((\Rightarrow K_{I})\) and \((OE)\) in case the principal atom of \((K_{I}\Rightarrow)\) is not active in it.
\end{lemmaA}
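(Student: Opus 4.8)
The plan is to argue by a case analysis on the rule $(R)$ applied immediately below the application of $(K_{I}\Rightarrow)$ under consideration, i.e. on a derivation whose bottom two inferences are
\[
\frac{\ \dfrac{t:A,\ s:K_{I}A,\ s\overset{I}{\sim}t,\ \Gamma \Rightarrow \Delta}{s:K_{I}A,\ s\overset{I}{\sim}t,\ \Gamma \Rightarrow \Delta}\ (K_{I}\Rightarrow)\qquad \cdots}{\ \text{end-sequent}\ }\ (R),
\]
where "$\cdots$" denotes the remaining premises of $(R)$ (none, one, or several). In every case the target is a derivation of the same end-sequent whose last inference is $(K_{I}\Rightarrow)$, with $(R)$ occurring only above it. First I would dispatch the "non-interfering" rules, which make up the bulk of the cases but are routine. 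The principal formula $s:K_{I}A$ of $(K_{I}\Rightarrow)$ is never active in any other rule — its outermost symbol is $K_{I}$, while the propositional left rules act on propositional connectives, the substitution rules on atoms, and another left knowledge rule retains its principal formula in the premise — so $s:K_{I}A$ always survives into the end-sequent; likewise $s\overset{I}{\sim}t$ whenever it is merely a context atom of $(R)$. In that situation I apply $(R)$ to the premise(s) of $(K_{I}\Rightarrow)$, carrying $t:A$ into every premise — for the branching rules $(\vee\Rightarrow),(\rightarrow\Rightarrow),(\Rightarrow\wedge)$ and for $(OYR)$ the copies of $t:A$ missing from the other premises are supplied by height-preserving weakening (Theorem~\ref{Theorem:AdmissibilityOfWeakening}) — and then finish with a single $(K_{I}\Rightarrow)$. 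Two provisos must be rechecked: that the "be not in $\Gamma$" and eigenvariable conditions of $(R)$ still hold after $t:A$ has been inserted (immediate, unless $t:A$ is precisely the forbidden formula, in which case $(R)$ has become vacuous, is deleted, and the duplicate it leaves is absorbed by admissibility of contraction, Theorem~\ref{Theorem:AdmissibilityOfContraction}); and that $(K_{I}\Rightarrow)$ is still applicable, namely $I\neq N$ and $t:A$ not in the new context — again immediate, or repaired by contraction when two copies of $t:A$ appear.

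The delicate rules are $(\Rightarrow K_{I})$ and $(OE)$, where everything hinges on whether the relational atom that $(R)$ makes active (reading from its conclusion to its premise) coincides with the principal atom $s\overset{I}{\sim}t$ of $(K_{I}\Rightarrow)$. If it does — $(\Rightarrow K_{I})$ being of the form with premise $s\overset{I}{\sim}t,\ s:K_{I}A,\ \Gamma\Rightarrow\Delta',\ t:B$, or $(OE)$ introducing exactly $s\overset{I}{\sim}t$ — then $s\overset{I}{\sim}t$ is absent from the end-sequent, so there is no relational atom on which a trailing $(K_{I}\Rightarrow)$ could act, and permutation is genuinely impossible; this is the exclusion recorded in the lemma. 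When instead the active atom of $(\Rightarrow K_{I})$ or $(OE)$ is some other $s'\overset{J}{\sim}t'$, then $s\overset{I}{\sim}t$ and $s:K_{I}A$ survive to the end-sequent and the generic permutation applies; the only extra point, for $(\Rightarrow K_{I})$, is that its eigenvariable $t'$ must stay fresh once $t:A$ is adjoined to the premise — but $t$ already occurs in the end-sequent inside $s\overset{I}{\sim}t$, so the eigenvariable condition of the original application already forces $t'\neq t$, and any residual clash with a newly chosen label is removed by first renaming via the Substitution Lemma (Lemma~\ref{Lemma:Substitution}) and then restoring the context by weakening.

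I expect the main obstacle to be not any individual case but the uniform bookkeeping of side conditions: the "be not in $\Gamma$" provisos on $(K_{I}\Rightarrow),(K_{N}\Rightarrow),(OE),(Sub(p)\Rightarrow),(Sub(o^{r})\Rightarrow),(Ref),(Trans),(Eucl),(Mon)$, and the freshness proviso of $(\Rightarrow K_{I})$, all interact with the insertion of $t:A$ and must be checked separately for each rule. The two recurring remedies are always the same — delete an inference that has become redundant and cancel the resulting duplicate with admissibility of contraction; rename an eigenvariable with the Substitution Lemma followed by admissibility of weakening — so once these are in place the propositional rules, the substitution rules, $(CR)$, $(OYR)$, the relational rules, and a second knowledge rule all go through by the same template, and collecting the cases gives the statement.
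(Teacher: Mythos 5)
Your overall strategy is the right one and is, in substance, the argument the paper itself relies on: the paper's entire proof of this lemma is a one-line pointer to Lemma 6.3 of Negri \cite{Negri2005}, and what you have written is a faithful reconstruction of that permutation argument --- apply the lower rule first to the premise of \((K_{I}\Rightarrow)\), repair branching rules with height-preserving weakening, repair freshness clashes with the Substitution Lemma, absorb duplicates with admissibility of contraction, and observe that \((\Rightarrow K_{I})\) and \((OE)\) are blocked exactly when they are the inferences that introduce the atom \(s\overset{I}{\sim}t\) on the way up, so that it is missing from the end-sequent.

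There is, however, one family of cases your analysis silently skips. You assert that \(s\overset{I}{\sim}t\) survives to the end-sequent ``whenever it is merely a context atom of \((R)\)'', but you never treat the case in which it is \emph{not} merely a context atom of a relational rule. The rules \((Ref)\), \((Trans)\), \((Eucl)\) and \((Mon)\) behave exactly like \((OE)\) in the relevant respect: read downwards they delete a relational atom (their side conditions even require that atom to be absent from \(\Gamma\) of the conclusion). If the deleted atom is precisely the principal atom \(s\overset{I}{\sim}t\) of the \((K_{I}\Rightarrow)\) above --- e.g.\ \((Trans)\) producing \(s\overset{I}{\sim}t\) from \(s\overset{I}{\sim}s'\) and \(s'\overset{I}{\sim}t\), or \((Mon)\) producing it from \(s\overset{J}{\sim}t\) --- then the end-sequent contains no atom on which a trailing \((K_{I}\Rightarrow)\) could act, and the naive permutation fails for exactly the reason you yourself give for \((\Rightarrow K_{I})\) and \((OE)\). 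This is arguably an imprecision inherited from the lemma statement (and from the cited source), and it is harmless for the lemma's only use in the paper: in Lemma~\ref{Lemma:MinimalApplicationsOfTheRuleLKI} the upper \((K_{I}\Rightarrow)\) is permuted down toward a lower application with the same principal pair, so \(s\overset{I}{\sim}t\) is already present in every intervening sequent, and the freshness side conditions then forbid any intervening rule from being one that introduces it. A complete proof must either add these relational cases to the exceptions or make this vacuity argument explicit; your write-up does neither.
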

\begin{proof}\hspace*{1mm}\\
The Lemma \ref{Lemma:PermutationOfTheRuleLKI} is proved in the same way as the Lemma 6.3. in \cite{Negri2005}. 
\end{proof}

\begin{lemmaA}[Number of applications of the rule \((K_{I}\Rightarrow)\)]\label{Lemma:MinimalApplicationsOfTheRuleLKI}
If a sequent \(S\) is provable in GS-LCK, then there exists the proof of \(S\) such that rule \((K_{I}\Rightarrow)\) is applied no more than once on the same pair of principal formulas on any branch.
\end{lemmaA}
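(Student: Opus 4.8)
The plan is to show that any GS-LCK proof of $S$ can be rewritten so that no branch repeats a $(K_{I}\Rightarrow)$-pair, arguing by induction on the number $m$ of applications of $(K_{I}\Rightarrow)$ in the proof. If no branch carries two applications of $(K_{I}\Rightarrow)$ with the same principal pair, there is nothing to do. Otherwise I would fix a branch and a principal pair $s:K_{I}A,\ s\overset{I}{\sim}t$ for which $(K_{I}\Rightarrow)$ is applied at least twice on that branch, and among those applications pick a lower one $\rho_{1}$ and the next higher one $\rho_{2}$; in general several other rule applications sit between $\rho_{1}$ and $\rho_{2}$.

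Next I would permute $\rho_{2}$ downward, one step at a time, past every rule between $\rho_{2}$ and $\rho_{1}$, using Lemma~\ref{Lemma:PermutationOfTheRuleLKI}. That lemma moves $(K_{I}\Rightarrow)$ down past every rule of GS-LCK, with a proviso only for $(\Rightarrow K_{I})$ and $(OE)$: the principal relational atom $s\overset{I}{\sim}t$ of $\rho_{2}$ must not be active in them. This proviso is automatically satisfied in our situation, since $s\overset{I}{\sim}t$ occurs in the conclusion of $\rho_{1}$ and no rule of GS-LCK erases a formula from the antecedent, so $s\overset{I}{\sim}t$ is present in the antecedent of every sequent between $\rho_{1}$ and $\rho_{2}$; hence an intervening $(\Rightarrow K_{I})$ cannot have introduced $s\overset{I}{\sim}t$ as its fresh relational atom (its label $t$ would not be new there), and an intervening $(OE)$ cannot have introduced it either (its side condition forbids $s\overset{I}{\sim}t$ being already present). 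When $(K_{I}\Rightarrow)$ is permuted below a branching rule, the other premises merely acquire a copy of the active formula $t:A$, which is harmless by admissibility of weakening (Theorem~\ref{Theorem:AdmissibilityOfWeakening}); importantly, no new occurrence of $(K_{I}\Rightarrow)$ is produced. After these steps $\rho_{2}$ stands immediately above $\rho_{1}$.

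Now the two applications are consecutive on the pair $s:K_{I}A,\ s\overset{I}{\sim}t$, so the proof contains the block
\begin{displaymath}
\frac{\displaystyle\frac{t:A,\,t:A,\,s:K_{I}A,\,s\overset{I}{\sim}t,\,\Gamma\Rightarrow\Delta}{t:A,\,s:K_{I}A,\,s\overset{I}{\sim}t,\,\Gamma\Rightarrow\Delta}\ (K_{I}\Rightarrow)}{s:K_{I}A,\,s\overset{I}{\sim}t,\,\Gamma\Rightarrow\Delta}\ (K_{I}\Rightarrow)
\end{displaymath}
By admissibility of contraction (Theorem~\ref{Theorem:AdmissibilityOfContraction}), the proof of the topmost sequent yields a proof of $t:A,\,s:K_{I}A,\,s\overset{I}{\sim}t,\,\Gamma\Rightarrow\Delta$, and a single further application of $(K_{I}\Rightarrow)$ recovers $s:K_{I}A,\,s\overset{I}{\sim}t,\,\Gamma\Rightarrow\Delta$. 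Replacing the two-step block by this one-step block gives a proof of $S$ with strictly fewer applications of $(K_{I}\Rightarrow)$, so the induction hypothesis finishes the argument.

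The main obstacle is the downward permutation of $\rho_{2}$: one must be certain it never gets stuck and never multiplies $(K_{I}\Rightarrow)$. The argument isolates the two facts to check — that antecedent formulas persist upward along a branch in GS-LCK, which makes the $(\Rightarrow K_{I})$ and $(OE)$ provisos of Lemma~\ref{Lemma:PermutationOfTheRuleLKI} vacuous here, and the precise behaviour of that lemma at branching rules, which I would cite rather than reprove. I note in passing that the same persistence remark already gives the result directly: once $(K_{I}\Rightarrow)$ has fired on $s:K_{I}A,\ s\overset{I}{\sim}t$, the formula $t:A$ lies in the antecedent everywhere above, so its own side condition blocks any further application of $(K_{I}\Rightarrow)$ on that pair higher on the branch. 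I would nonetheless keep the permutation-and-contraction version, since it is the form that transfers to the companion bounds on $(\Rightarrow K_{I})$, $(OE)$ and the other non-destructive rules used in the termination proof.
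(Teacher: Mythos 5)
Your main argument is correct and is essentially the paper's own proof: the paper also inducts on the number of offending repetitions and removes one by permuting the upper $(K_{I}\Rightarrow)$ down to meet the lower one (via the permutation lemma) and then collapsing the adjacent pair by admissibility of contraction, citing Negri's Corollary~6.5 for the details you spell out. Your verification that the provisos for $(\Rightarrow K_{I})$ and $(OE)$ are vacuous is sound for the relational atom $s\overset{I}{\sim}t$, since relational atoms genuinely persist upward in every rule of GS-LCK.

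However, two of your side claims are false and should be dropped. First, the blanket statement that \emph{no} rule of GS-LCK erases an antecedent formula fails for the left propositional rules: $(\neg\Rightarrow)$, $(\vee\Rightarrow)$, $(\wedge\Rightarrow)$ and $(\rightarrow\Rightarrow)$ all consume their principal formula reading upward; persistence holds only for relational atoms and for formulas no rule decomposes on the left. Second, and consequently, your ``in passing'' remark that the side condition of $(K_{I}\Rightarrow)$ already forbids a second application on the same pair is wrong: if $A$ is compound, the formula $t:A$ introduced by the first application can be decomposed higher up (e.g.\ $t:B\wedge C$ is replaced by $t:B,\,t:C$ by $(\wedge\Rightarrow)$), after which $t:A$ is no longer in the antecedent while $s:K_{I}A$ and $s\overset{I}{\sim}t$ still are, so $(K_{I}\Rightarrow)$ may legally fire again on the same pair. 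This is precisely why the lemma is not trivial and why the permutation-and-contraction argument — which you correctly keep as the actual proof — is needed.
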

\begin{proof}\hspace*{1mm}\\
The Lemma \ref{Lemma:MinimalApplicationsOfTheRuleLKI} is proved by induction on the number \(N\) of pairs of applications of rule \((K_{I}\Rightarrow)\) on the same branch with the same principal pair. \\\\
\(<N = 0>\) The proof of the lemma is obtained. \\\\
\(<N > 0>\) \\\\ 
We diminish the inductive paramater in the same way as in the proof of Corollary 6.5. in \cite{Negri2005}, using Lemma \ref{Lemma:PermutationOfTheRuleLKI}. QED
\end{proof}
\begin{lemmaA}[Number of applications of the rule \((\Rightarrow K_{I})\)]\label{Lemma:MinimalApplicationsOfTheRuleRKI}
If a sequent \(S\) is provable in GS-LCK, then there exists the proof of \(S\) such that for each formula \(s:K_{I}A\) in its positive part there are at most \(n(K_{I})\) applications of \((\Rightarrow K_{I})\) iterated on a chain of accessible worlds \(s\overset{I}{\sim}s_{1}, s_{1}\overset{I}{\sim}s_{2}, ...\), with principal formula \(s_{i}:K_{I}A\). The latter proof is called regular.
\end{lemmaA}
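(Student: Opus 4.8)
The plan is to start from \emph{any} GS-LCK proof of $S$ and successively ``cut short'' every over-long chain of $(\Rightarrow K_{I})$-applications, until no chain for a positive $s:K_{I}A$ exceeds length $n(K_{I})$. First I would normalise: by Lemma~\ref{Lemma:MinimalApplicationsOfTheRuleLKI} we may assume the proof is already minimal with respect to $(K_{I}\Rightarrow)$, and by Lemma~\ref{Lemma:PermutationOfTheRuleLKI} the $(K_{I}\Rightarrow)$-steps can be permuted downwards so that the chain structure is transparent. Call an \emph{$I$-chain for $K_{I}A$} on a branch a maximal sequence of $(\Rightarrow K_{I})$-applications with principal formulas $s_{0}:K_{I}A, s_{1}:K_{I}A, \dots, s_{m}:K_{I}A$, introducing the fresh relational atoms $s_{0}\overset{I}{\sim}s_{1}, s_{1}\overset{I}{\sim}s_{2}, \dots, s_{m-1}\overset{I}{\sim}s_{m}$; here $s_{1},\dots,s_{m}$ are pairwise distinct fresh labels. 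A key structural remark is that, since $\overset{I}{\sim}$ is reflexive, transitive and Euclidean, from the atoms already present along the chain one derives, using only $(Ref)$, $(Trans)$ and $(Eucl)$, every atom $s_{i}\overset{I}{\sim}s_{j}$; in particular $s_{0}\overset{I}{\sim}s_{j}$ and $s_{j}\overset{I}{\sim}s_{0}$ for every $j$. Consequently, via $(K_{I}\Rightarrow)$ every labelled $K_{I}$-formula available in an antecedent at one chain world is available (after the relabelling that matches the two worlds) at any other chain world.

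The core claim is: if some $I$-chain for $K_{I}A$ on some branch has length $m>n(K_{I})$, then there is a proof of $S$ in which this chain is strictly shorter. To see it, associate to the $i$-th step of the chain the set $\Theta_{i}$ of formulas $K_{I}B$ such that $s_{i}:K_{I}B$ occurs in the antecedent of the premise of that step. Every such occurrence was put into the antecedent by tracing back through $(\wedge\Rightarrow)$, $(\vee\Rightarrow)$, $(\rightarrow\Rightarrow)$, $(\neg\Rightarrow)$, $(K_{I}\Rightarrow)$ or the $(Sub)$-rules, hence $K_{I}B$ is an instance (modulo labels) of one of the finitely many negative $K_{I}$-subformula occurrences of $S$; so there are at most $n(K_{I})$ possibilities for the \emph{formula} $K_{I}B$, and by the remark above each of these, if available at all, is available at \emph{every} chain world. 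Therefore, when $m>n(K_{I})$, there are indices $i<j$ such that the $i$-th and $j$-th premises carry, up to the relabelling $s_{i}\leftrightarrow s_{j}$, the same set of antecedent $K_{I}$-formulas.

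Given such $i<j$, take the subderivation $\mathcal{D}_{i}$ above the $i$-th premise; apply to it the renaming $s_{i}\mapsto s_{j}$ together with renamings of the fresh labels introduced above $s_{i}$ to labels not occurring above the $j$-th premise — all legitimate by Lemma~\ref{Lemma:Substitution}; weaken in whatever context of the $j$-th premise is still missing by Theorem~\ref{Theorem:AdmissibilityOfWeakening}; re-derive the missing relational atoms of the $j$-th premise from $s_{0}\overset{I}{\sim}s_{j}$ and its transitive/Euclidean consequences; and discard duplicates by Theorem~\ref{Theorem:AdmissibilityOfContraction}. This yields a derivation of the $j$-th premise, which we graft in place of the subderivation that originally sat there; the chain segment between step $i$ and step $j$ disappears. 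One iterates this transformation; as a terminating measure take the finite multiset of lengths of all $I$-chains on all branches of the proof, ordered by the multiset order. Each shortcut strictly decreases it — and it does not increase, because the grafted material is a renamed copy of material already present, so it contributes no new chains beyond those already counted. When the measure cannot be decreased further, every chain for every positive $K_{I}A$ has length $\le n(K_{I})$: the proof is regular.

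The main obstacle is exactly the counting step of the second paragraph: making rigorous (i) that along a chain only finitely many — at most $n(K_{I})$ — distinct $K_{I}$-formulas can reach an antecedent, which requires a careful subformula/polarity tracking through all left rules and the substitution rules, and (ii) that ``same antecedent $K_{I}$-context at $s_{i}$ and $s_{j}$'' really licenses transplanting $\mathcal{D}_{i}$ for the subderivation at step $j$, which is where the equivalence-relation character of $\overset{I}{\sim}$ (reflexivity $+$ transitivity $+$ euclideanness, together with monotonicity for the sets $I$) and the admissibility results do the work. The remaining bookkeeping — renamings, weakenings, contractions, re-derivation of relational atoms — is routine and is precisely the loop-elimination argument used for the analogous modalities in Negri~\cite{Negri2005}.
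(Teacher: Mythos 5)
Your overall strategy --- bound the set of negative $K_{I}$-formulas that can reach a chain world, find a repetition by pigeonhole, and excise the redundant chain segment by renaming plus the admissibility lemmas --- is exactly the loop-elimination argument of Proposition 6.9 of Negri~\cite{Negri2005}, which is all the paper's own proof consists of (it is a two-line deferral to that proposition, by induction on the number of chains violating regularity). However, as written your surgery step goes in the wrong direction and would fail. For $i<j$ the $i$-th premise sits \emph{below} the $j$-th premise in the proof tree, so the subderivation $\mathcal{D}_{i}$ above the $i$-th premise properly contains the chain steps $i+1,\dots,j$ together with $\mathcal{D}_{j}$. Grafting a renamed copy of $\mathcal{D}_{i}$ in place of the subderivation above the $j$-th premise therefore \emph{duplicates} the segment between $i$ and $j$ rather than deleting it: the new proof contains the original steps $1,\dots,j$ followed by a renamed copy of steps $i+1,\dots,j$, your multiset measure strictly increases, and the iteration does not terminate. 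The correct operation is the opposite one: rename $\mathcal{D}_{j}$ by $s_{j}\mapsto s_{i}$ (with fresh renamings of the labels created above $s_{j}$) and install it above the $i$-th premise. That exposes the real difficulty, which weakening cannot absorb: the $j$-th premise carries a \emph{superset} of the context of the $i$-th premise (all relational atoms and labelled formulas accumulated between the two steps), so Theorem~\ref{Theorem:AdmissibilityOfWeakening} points the wrong way; one must instead show that the surplus context is re-derivable below the graft point by $(Ref)$, $(Trans)$, $(Eucl)$, $(Mon)$ and $(K_{I}\Rightarrow)$, or is inactive in $\mathcal{D}_{j}$ --- this is precisely what the ``same relevant context'' condition is for in Negri's argument.

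A second, smaller gap is the counting step you yourself flag. The sets $\Theta_{i}$ are subsets of a universe of size at most $n(K_{I})$, so $m>n(K_{I})$ steps do not by themselves force $\Theta_{i}=\Theta_{j}$ for some $i<j$; a naive pigeonhole would require $m>2^{n(K_{I})}$. What rescues the bound $n(K_{I})$ is monotonicity, $\Theta_{0}\subseteq\Theta_{1}\subseteq\cdots$, which holds because antecedents only accumulate going up the tree and availability propagates up the chain via the equivalence properties of $\overset{I}{\sim}$; a weakly increasing chain of length greater than $n(K_{I})$ in an $n(K_{I})$-element universe must stall, yielding equal consecutive sets. You state the propagation remark but not the monotonicity it is needed for, and your formulation (``available at every chain world'') would make all $\Theta_{i}$ equal, which cannot be right since availability depends on position in the derivation. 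Both defects are repairable along the lines of the cited reference, but as it stands the proposal does not yet constitute a proof.
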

\begin{proof}\hspace*{1mm}\\
The Lemma \ref{Lemma:MinimalApplicationsOfTheRuleRKI} is proved by induction on the number \(N\) of series of applications of rule \((\Rightarrow K_{I})\), which make the initial proof non-regular.\\\\
\(<N = 0>\) The proof of the lemma is obtained. \\\\
\(<N > 0>\) \\\\ 
We diminish the inductive paramater in the same way as in the proof of Proposition 6.9. in \cite{Negri2005}. QED
\end{proof}

\begin{theoremA}[Termination of GS-LCK-PROC]\label{theorem:TerminationOfGSLCKPROC}
The procedure GS-LCK-PROC performs terminating proof search for each formula over \((R,\Sigma,\vec O)\).
\end{theoremA}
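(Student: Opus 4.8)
The plan is to show that every branch of the proof-search tree built by GS-LCK-PROC is finite, and that the tree is finitely branching, so that by König's lemma the whole search tree is finite and the procedure halts. The argument proceeds by bounding, for an arbitrary input sequent $S$ over the fixed finite data $(R,\Sigma,\vec O)$ and finite agent set $N$, the number of distinct labels, relational atoms and labelled formulas that can ever appear on a single branch.

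First I would analyse the rules according to how the procedure applies them. The invertible propositional rules in steps~2--3 and the rules in step~7 strictly decrease the number of logical connectives (or the number of relational/observational atoms still "to be processed"), so no infinite sequence of these alone is possible; each such application also only adds subformulas of formulas already present, so it introduces no new labels. The genuinely dangerous rules are the ones that create fresh labels or could be re-applied on the same data, namely $(K_I\Rightarrow)$, $(K_N\Rightarrow)$ and $(\Rightarrow K_I)$ — and here the bookkeeping in $TableLK$ and $TableRK$ is exactly what tames them. Step~4 forbids re-applying $(K_I\Rightarrow)$ (or $(K_N\Rightarrow)$) on a principal pair already recorded in $TableLK$; by Lemma~\ref{Lemma:MinimalApplicationsOfTheRuleLKI} this loses no provable sequent, and it bounds the number of $(K_I\Rightarrow)$-applications on a branch by the number of available principal pairs $s:K_IA$ paired with a relational atom $s\overset{I}{\sim}t$. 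Step~5 permits $(\Rightarrow K_I)$ only while the chain length recorded in $TableRK$ is below $Max = n(K_I)+1$, and by Lemma~\ref{Lemma:MinimalApplicationsOfTheRuleRKI} restricting to regular proofs again loses nothing; this caps the number of fresh labels introduced by $(\Rightarrow K_I)$ for each positive $K_I$-subformula of $S$ by $n(K_I)$.

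Next I would combine these into a global bound. The labels occurring on a branch are the finitely many labels of $S$ together with the fresh labels introduced by $(\Rightarrow K_I)$; by step~5 the latter number is at most $\sum n(K_I)$ taken over the finitely many positive knowledge subformulas of $S$, hence finite. With a finite set $L$ of labels available, there are only finitely many possible relational atoms $s\overset{I}{\sim}t$ (since $N$ is finite, so there are finitely many $I\subseteq N$) and only finitely many labelled subformulas $s:B$ with $B$ a subformula of $S$ (or one of the finitely many atoms $o^r$, $e^{\Sigma\{\dots\}}$ generated by $(OYR)$ and $(CR)$, which range over the finite sets $O_I$ and $R$). Each rule application, by the side conditions ("be not in $\Gamma$", "be not in $\Delta$", "$t$ not in the conclusion", the $TableLK$/$TableRK$ guards), strictly enlarges the multiset of atoms/formulas occurring in the antecedent-or-succedent when it does not strictly decrease connective-complexity; so along a branch the pair (remaining connective complexity, number of atoms/formulas still absent from the sequent) decreases in a well-founded order, and the branch terminates. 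Finite branching is immediate since every rule has finitely many premises and, at each node, only finitely many rule instances are applicable (finitely many principal formulas, finitely many choices of fresh label up to renaming). König's lemma then gives finiteness of the search tree, so GS-LCK-PROC halts — returning True exactly when all leaves are axioms, by soundness and completeness.

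The main obstacle I expect is making the "strictly enlarges / strictly decreases" measure genuinely well-founded in the presence of the label-creating rule $(\Rightarrow K_I)$: one must verify that the $Max$ cutoff in $TableRK$ really does prevent an unbounded chain of fresh labels, and that no other rule (in particular $(OE)$, $(Ref)$, $(Trans)$, $(Eucl)$, $(Mon)$, which add relational atoms) can be triggered infinitely often once the label set is fixed — this is where their non-repetition side conditions and the finiteness of $\{I : I\subseteq N\}$ must be used carefully. The delicate point is that $(\Rightarrow K_I)$ both consumes complexity (it removes the $K_I$ from the succedent) and adds a fresh label, so the termination measure has to be lexicographic with the (bounded, by $TableRK$) supply of still-permitted $(\Rightarrow K_I)$-steps as an outer component; verifying that Lemmas~\ref{Lemma:MinimalApplicationsOfTheRuleLKI} and~\ref{Lemma:MinimalApplicationsOfTheRuleRKI} justify imposing the $TableLK$/$TableRK$ restrictions without sacrificing completeness is the crux of the proof.
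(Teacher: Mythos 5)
Your proposal is correct and follows essentially the same route as the paper: the $TableLK$/$TableRK$ guards bound the applications of $(K_{I}\Rightarrow)$ and $(\Rightarrow K_{I})$ (justified by Lemmas \ref{Lemma:MinimalApplicationsOfTheRuleLKI} and \ref{Lemma:MinimalApplicationsOfTheRuleRKI}), the propositional rules decrease complexity, and the remaining rules are bounded by the finiteness of $N$, $(R,\Sigma,\vec O)$ and their non-repetition side conditions. The paper's own proof is only a brief sketch of this argument; your explicit well-founded lexicographic measure and the appeal to K\H{o}nig's lemma supply detail the paper omits, but they do not change the underlying approach.
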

\begin{proof}\hspace*{1mm}\\
From construction of the procedure GS-LCK-PROC follows that the number of applications of the rules \((K_{I}\Rightarrow)\) and \((\Rightarrow K_{I})\) is finite.\\\\
All the propositional rules reduce the complexity of the root sequent. Since the sets \(N, (R,\Sigma), \vec O\) and the number of applications of the rules \((K_{I}\Rightarrow)\), \((\Rightarrow K_{I})\) are finite, and the requirements are imposed on the rules, the number of applications of the rules \((K_{N}\Rightarrow), (\Rightarrow K_{N}), (OE), (OYR), (CR), (Sub(p) \Rightarrow), (Sub(o^{r}) \Rightarrow), (Ref), (Trans), (Eucl)\) and \((Mon)\) is also finite.\\\\
According to finite number of applications of all rules, the procedure GS-LCK-PROC performs the terminating proof search for any sequent. QED
\end{proof}

\begin{theoremA}[Soundness and completeness of GS-LCK-PROC]\label{theorem:SoundnessAndCompletenessOfGSLCKPROC}
The procedure GS-LCK-PROC is sound and complete over \((R,\Sigma,\vec O)\).
\end{theoremA}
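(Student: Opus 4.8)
The plan is to prove the statement in two halves. \emph{Soundness of the procedure}: if GS-LCK-PROC answers ``True'' on a sequent, then that sequent is provable in GS-LCK, and hence, by the soundness of GS-LCK, its formula without labels and relational atoms is valid over $(R,\Sigma,\vec O)$. \emph{Completeness of the procedure}: if a formula $A$ is valid, then by the completeness of GS-LCK the sequent $(\Rightarrow s:A)$ is provable in GS-LCK, and we must show that the procedure halts with output ``True'' on it. Halting is already guaranteed by Theorem \ref{theorem:TerminationOfGSLCKPROC}, so in both directions we only need to track the \emph{value} returned.

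For soundness I would argue by induction on the finite recursion tree generated by a terminating run of GS-LCK-PROC (finiteness from Theorem \ref{theorem:TerminationOfGSLCKPROC}). A leaf at which the procedure sets $Output=True$ in step~1 is an axiom of GS-LCK. Every other step either applies a single rule of GS-LCK and recurses on its unique premise (steps 2, 4, 5, 7), or applies a branching rule ($(\vee\Rightarrow),(\Rightarrow\wedge),(\rightarrow\Rightarrow)$ in step~3; $(OYR)$ in step~6) and sets $Output=True$ only if \emph{all} the recursive calls returned ``True''. In every case the side conditions the procedure tests — absence of the principal pair in $TableLK$, absence of the principal formula in $TableRK$, absence of the relevant atoms in $\Gamma$ — are exactly the applicability provisos of the corresponding GS-LCK rule, so a successful run is literally a GS-LCK derivation read upward. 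The induction hypothesis then assembles a derivation of the input sequent, and composing with the soundness theorem for GS-LCK finishes this direction.

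For completeness, assume $(\Rightarrow s:A)$ is provable in GS-LCK. The crux is to exhibit a derivation lying inside the search space that the procedure explores, i.e. one that is \emph{regular} in the sense of Lemma \ref{Lemma:MinimalApplicationsOfTheRuleRKI} and in which $(K_I\Rightarrow)$ is applied at most once per principal pair on any branch, as in Lemma \ref{Lemma:MinimalApplicationsOfTheRuleLKI}; these two resource bounds are precisely what $TableLK$ and $TableRK$ (with $Max=n(K_I)+1$) enforce. I would then use invertibility of all rules (Theorem \ref{Theorem:InvertibilityOfRules}) together with admissibility of weakening and contraction (Theorem \ref{Theorem:AdmissibilityOfWeakening}, Theorem \ref{Theorem:AdmissibilityOfContraction}, Lemma \ref{Lemma:AdmissibilityOfContractionAtomic}) to show that the fixed scheduling of the procedure — single-premise propositional rules first, then the branching rules, then the knowledge, observational, substitution and relational rules — never destroys provability: firing an invertible rule keeps the premise provable, firing a branching rule keeps \emph{every} premise provable, and the ``not in $\Gamma$'' provisos are harmless because duplicated atoms can be contracted away. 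Since, by Theorem \ref{theorem:TerminationOfGSLCKPROC}, the search is finite, and, by the argument just sketched, a proof within the explored space exists, the exhaustive backward search that GS-LCK-PROC performs must reach axioms on all leaves and return ``True''. Composing with the completeness theorem for GS-LCK yields completeness of the procedure.

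The main obstacle I expect is exactly this completeness direction, and within it the need to certify that the two bounds on the knowledge rules (Lemmas \ref{Lemma:MinimalApplicationsOfTheRuleLKI} and \ref{Lemma:MinimalApplicationsOfTheRuleRKI}, resting on the permutation Lemma \ref{Lemma:PermutationOfTheRuleLKI}) match \emph{exactly} the bookkeeping in $TableLK$ and $TableRK$, and that the rigid top-down ordering of rule groups in the procedure is compatible with the existence of such a bounded regular proof. This compatibility is what invertibility buys us: it lets rule applications be permuted so that all invertible propositional steps can be pushed below the branching and knowledge steps without changing provability or violating the bounds. The remaining work — matching each procedure side-condition to its GS-LCK proviso, and the base cases of the two inductions — is routine.
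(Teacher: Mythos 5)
Your proposal is correct and follows essentially the same route as the paper: soundness ``from the construction'' (a successful run of the procedure is literally a GS-LCK derivation read upward), and completeness by appeal to Lemma \ref{Lemma:MinimalApplicationsOfTheRuleLKI} and Lemma \ref{Lemma:MinimalApplicationsOfTheRuleRKI}, which guarantee a proof within the bounded search space enforced by $TableLK$ and $TableRK$. The paper's own proof is only a two-sentence sketch citing exactly those two lemmas, so your version — with the induction on the recursion tree and the explicit use of invertibility to justify the procedure's fixed rule ordering — supplies detail the paper leaves implicit.
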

\begin{proof}\hspace*{1mm}\\
From construction of the procedure GS-LCK-PROC follows that if procedure returns "True" for a sequent \(S\), then \(S\) is provable in GS-LCK. If procedure returns "False", then sequent \(S\) is not provable in GS-LCK, according to Lemma \ref{Lemma:MinimalApplicationsOfTheRuleLKI} and Lemma \ref{Lemma:MinimalApplicationsOfTheRuleRKI}. QED
\end{proof}

\begin{theoremA}[Decidibility of LCK]\label{theoremA:1}
Logic LCK is decidable.
\end{theoremA}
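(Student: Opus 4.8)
\hspace*{1mm}\\
The plan is to turn the terminating proof search of Section~\ref{sec:DecidabilityOfLogicOfCorrelatedKnowledge} into an algorithm that, given a finite instance $(R,\Sigma,\vec O)$ and an arbitrary LCK formula $A$ over it, decides whether $A$ is valid in all correlation models over $(R,\Sigma,\vec O)$; since by completeness of HS-LCK (and of GS-LCK) the valid formulas are exactly the theorems, this establishes decidability of LCK.

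First I would fix the finite data $N$, $(R,\Sigma)$, $\vec O$, pick a label $s$, form the sequent $(\Rightarrow s:A)$, initialise the tables $TableLK$ and $TableRK$ as prescribed (all $Max$ values set to $n(K_I)+1$, chain lengths $0$, remaining cells empty), and run GS-LCK-PROC on this input. By Theorem~\ref{theorem:TerminationOfGSLCKPROC} the run halts after finitely many steps and returns either ``True'' or ``False'', so the procedure is a genuine algorithm.

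Next I would read off the verdict using Theorem~\ref{theorem:SoundnessAndCompletenessOfGSLCKPROC}. If the output is ``True'', then $(\Rightarrow s:A)$ is provable in GS-LCK, hence by the Soundness theorem of Section~\ref{sec:ProofOfSoundnessOfGSLCK} the formula of this sequent without labels and relational atoms, which is (equivalent to) $A$, is valid; so I would report ``$A$ valid''. If the output is ``False'', then $(\Rightarrow s:A)$ is not provable in GS-LCK, hence by the contrapositive of the Completeness theorem of Section~\ref{sec:ProofOfCompletenessOfGSLCK} the formula $A$ is not valid; so I would report ``$A$ not valid''. Since $A$ was arbitrary and the algorithm always terminates with the correct answer, the set of valid LCK formulas over $(R,\Sigma,\vec O)$ is decidable, and varying $(R,\Sigma,\vec O)$ this gives decidability of LCK.

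The two implications above are routine once the machinery is in place; the real content has already been supplied. The main obstacle — handled earlier — is the ``\,output False $\Rightarrow$ not provable\,'' direction: one must know that the restrictions GS-LCK-PROC imposes (never re-applying $(K_I\Rightarrow)$ on a repeated principal pair on a branch, never extending a $(\Rightarrow K_I)$-chain beyond $Max = n(K_I)+1$) do not prune any derivable sequent, and that all non-knowledge rules fire only finitely often because $N$, $(R,\Sigma)$ and $\vec O$ are finite and the side conditions block re-introducing atoms already present. These facts are exactly Lemmas~\ref{Lemma:PermutationOfTheRuleLKI}, \ref{Lemma:MinimalApplicationsOfTheRuleLKI} and \ref{Lemma:MinimalApplicationsOfTheRuleRKI} together with Theorem~\ref{theorem:TerminationOfGSLCKPROC}, so nothing further is needed here. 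QED
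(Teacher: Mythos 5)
Your proposal is correct and follows essentially the same route as the paper, which simply cites Theorem~\ref{theorem:SoundnessAndCompletenessOfGSLCKPROC} and Theorem~\ref{theorem:TerminationOfGSLCKPROC} to conclude that GS-LCK-PROC is a decision procedure; you merely spell out the implicit chain through soundness and completeness of GS-LCK with respect to correlation models, which the paper leaves tacit.
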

\begin{proof}\hspace*{1mm}\\
From Theorem \ref{theorem:SoundnessAndCompletenessOfGSLCKPROC} and Theorem \ref{theorem:TerminationOfGSLCKPROC} follows that GS-LCK-PROC is a decision procedure for logic LCK. QED
\end{proof}

\section*{Conclusions}
Sequent calculus GS-LCK has properties of soundness, completeness, admissibility of cut and structural rules, and invertibility of all rules. Procedure GS-LCK-PROC performs automated terminating proof search for logic of correlated knowledge and also has properties of soundness and completeness. \\\\
Using GS-LCK-PROC, the validity of the formula of any sequent can be determined and inferences can be checked if they follow from some knowledge base. Modelling the knowledge of distributed systems in the logic of correlated knowledge, questions about the systems can be answered automatically. Also soundness, completeness and termination of GS-LCK-PROC show that GS-LCK-PROC is a desicion procedure for logic of correlated knowledge and LCK is decidable logic, which means asking questions about the system we will always get the answer.\\\\
Logic of correlated knowledge expands the range of the applications of family of epistemic logics and captures deeper knowledge of the group of agents in the distributed systems. GS-LCK-PROC allows to reason about correlated knowledge automatically, without human interaction in the reasoning process.\\\\
\bibliographystyle{plain}
\bibliography{Giedra_Alonderis_Automated_proof_search_system_for_logic_of_correlated_knowledge}{}

\end{document}